\documentclass[journal]{IEEEtran}

\usepackage{graphicx}
\usepackage{placeins}
\usepackage{amsmath,amsthm,amsfonts,amssymb}
\usepackage{cite}
\usepackage{bm}
\usepackage{url}
\usepackage{array}
\usepackage{enumitem}
\usepackage{algorithm}
\usepackage[noend]{algpseudocode}
\usepackage[english]{babel}
\graphicspath{{figs/}}
\usepackage{xcolor}

\theoremstyle{plain}
\newtheorem{thm}{Theorem}
\newtheorem{lem}[thm]{Lemma}
\newtheorem{prop}[thm]{Proposition}
\newtheorem{cor}[thm]{Corollary}
\newtheorem{defi}{Definition}
\newtheorem{rem}{Remark}
\newtheorem{example}{Example}

\newenvironment{NewProof}{{\noindent\it Proof.}}{\hfill $\blacksquare$\par}

\newcommand{\C}{\mathbb C}
\newcommand{\R}{\mathbb R}
\newcommand{\E}{\mathbb E}
\newcommand{\CN}{\mathcal{CN}}
\newcommand{\cP}{\mathcal P}
\newcommand{\cR}{\mathcal R}
\newcommand{\cQ}{\mathcal Q}
\newcommand{\cC}{\mathcal C}
\newcommand{\cS}{\mathcal S}
\newcommand{\tr}{\operatorname{tr}}
\newcommand{\diag}{\operatorname{diag}}
\newcommand{\blkdiag}{\operatorname{blkdiag}}
\newcommand{\rank}{\operatorname{rank}}
\newcommand{\dd}{\mathrm d}
\newcommand{\argmax}{\operatorname*{arg\,max}}

\begin{document}

\title{Networked Embodied Communication: From Collective Distinguishability to Communication Reliability}

\author{Yewen Cao and Yulin Shao%
\thanks{The authors are with the Department of Electrical and Computer Engineering, The University of Hong Kong, Hong Kong, China (e-mails: caoyewen3@gmail.com, ylshao@hku.hk).}%
}

\maketitle

\begin{abstract}
Embodied agents need to convey information to surrounding infrastructure, but their active communication interfaces may be unavailable, constrained, or intentionally inactive.
Their ability to manipulate physical states offers a complementary path: messages can be encoded in deliberately selected configurations and recovered through infrastructure sensing. 
This principle underlies embodied communication. 
Yet physical differences do not guarantee distinguishable messages: a single sensing viewpoint may leave ambiguities that repeated sensing cannot resolve. 
This paper develops networked embodied communication, where distributed access points (APs) jointly observe message-bearing scatterer positions under fixed illumination.
Under a correlated Gaussian sensing model, we characterize the additional distinguishability supplied by receive APs, establish exact redundancy conditions, and reveal how distinctions absent from individual observations can emerge through cross-AP statistical relationships. 
We then establish the exact asymptotic optimal maximum-error behavior of a finite alphabet under repeated independent sensing. 
The largest group of indistinguishable messages determines the error floor; once all messages are distinguishable, the minimum pairwise Chernoff information determines the error exponent. 
For a given alphabet, receiver cooperation can therefore eliminate an error floor that repetition at any individual AP cannot overcome. 
Building on these results, we derive finite-budget reliability conditions and jointly design the receive AP set and message-bearing positions. 
Numerical results show that the proposed search closely approaches exact benchmarks on reduced instances with substantially fewer candidate evaluations than exhaustive enumeration, while receiver cooperation reduces the sensing intervals needed to guarantee reliable decoding.
\end{abstract}

\begin{IEEEkeywords}
Embodied communication, integrated sensing and communication, Chernoff information, multistatic sensing.
\end{IEEEkeywords}

\section{Introduction}
\label{sec:intro}

Embodied artificial intelligence (AI) is increasingly moving from standalone autonomous agents toward agents operating within sensing-enabled intelligent environments \cite{duan2022survey,cui2024llmind}. In such environments, embodied agents need to convey task- and coordination-related information to surrounding infrastructure and networked services \cite{shao2024theory}. Conventional wireless communication naturally supports this exchange through a radio frequency (RF) communication interface, but for some agents or operating conditions, such an interface may be unavailable \cite{sitti2015biomedical}, constrained \cite{lu2020batteryless}, intermittent \cite{saboia2022achord}, or intentionally inactive \cite{bash2013limits}. Meanwhile, the agent may still be able to deliberately alter its physical state or surroundings, while the infrastructure can sense the resulting physical changes. This suggests that a controllable physical state itself can serve as an information-bearing medium \cite{pagello1999cooperative}.

Embodied communication formalizes this principle by mapping messages onto deliberately selected physical states or actions and recovering the intended message from sensing observations \cite{shao2026embodied,dragan2013legibility,wu2025actions}. The physical world thereby becomes part of the communication mechanism, without requiring the agent to transmit a data-bearing waveform. Depending on the application, the information-bearing state may take the form of a configuration, motion, or interaction, while the sensing modality may be RF, visual, infrared, acoustic, or multimodal.

The convergence of embodied communication and wireless systems is enabled by integrated sensing and communication (ISAC) \cite{liu2022isacsurvey}, which equips wireless infrastructure with RF sensing capabilities. The agent's ability to act and the infrastructure's ability to sense can therefore remain available even when the agent's RF communication interface is unavailable or inactive. Recent studies instantiate the information-bearing physical state through the position of a controllable scatterer \cite{shao2026embodied,lei2026nearfield}: the agent conveys a message by placing the scatterer at a deliberately selected position, while the wireless infrastructure illuminates the environment and infers that position from the resulting RF observations. These studies show that physically realizable scatterer positions can form an embodied alphabet. Crucially, communication reliability is determined not by the physical differences among these positions alone, but by the statistical distinguishability of the sensing observations they induce.

Existing studies, however, view the embodied state through a single sensing site. This places a fundamental limit on the statistical distinctions available for decoding: two physically different scatterer positions may induce only weakly separated observation distributions from one viewpoint, or may even induce the same distribution. In the former case, repeated sensing can accumulate evidence and improve reliability; in the latter, no amount of repetition under the same sensing configuration can reveal a distinction that is simply not present \cite{lei2026nearfield}. The limitation is therefore not always one of sensing duration, but of sensing diversity. This motivates \emph{networked embodied communication}, in which spatially distributed receivers observe the same physical state from different viewpoints and their observations are jointly processed, allowing distinctions that are missing locally to emerge collectively.

Cell-free ISAC provides a natural foundation for realizing networked embodied communication \cite{demirhan2024cellfree,mao2024communicationsensing}. Spatially distributed access points (APs) can observe the same physical state from different locations and forward their measurements to a central processing unit (CPU) for joint processing \cite{ngo2017cellfree,bjornson2020competitive,behdad2024multistatic}. Distinguishability is then determined by the observations available to the receive network as a whole rather than by any single sensing viewpoint. This networked setting raises three challenges.
\begin{itemize}[leftmargin=0.45cm]

\item The value of a receive AP depends on what the network already observes. An AP that is informative on its own may contribute little when its information is redundant, while another AP may be valuable because it complements the existing viewpoints. Networked embodied communication therefore requires a notion of distinguishability that captures the contribution of observations collectively rather than in isolation.

\item Collective distinguishability must ultimately be translated into communication reliability. Some ambiguities can be reduced by accumulating more sensing observations, whereas others persist because the selected network does not provide the statistical distinction needed for decoding. A reliability theory is therefore needed to determine when network cooperation merely strengthens communication and when it fundamentally changes whether reliable communication is possible.

\item The receive network and the embodied alphabet are intrinsically coupled. The alphabet determines which physical states must be distinguished, while the selected APs determine which distinctions the network can provide. Reliable networked embodied communication therefore requires the sensing infrastructure and the message-bearing physical states to be designed together. This coupling distinguishes networked embodied communication from sensor selection problems with prescribed hypotheses \cite{bajovic2011sensor,shao2021federated} and conventional receiver selection problems in sensing and ISAC \cite{yan2024communicate,wang2026receiver}.
\end{itemize}

To address these challenges, this paper develops a networked embodied communication framework with centralized cell-free sensing. The transmit AP configuration and probing signals are kept fixed, while the participating receive APs are selected from a distributed candidate set. This isolates the role of receive-side sensing diversity and allows us to study how the network observations themselves shape the distinguishability, reliability, and design of embodied communication.

Our main contributions are summarized as follows.

\begin{enumerate}[leftmargin=0.45cm]

\item We establish a theory of collective distinguishability for networked embodied communication. Unlike single-viewpoint formulations, where distinguishability is determined by one sensing observation, the networked setting allows different receive APs to contribute complementary statistical distinctions. We characterize the additional pairwise distinguishability brought by new AP observations and derive an exact condition for when those observations are redundant. The analysis further reveals that, with correlated observations, a distinction can arise from the joint statistical relationship among APs rather than from the standalone observation of any individual AP. This provides a basis for understanding when distributed sensing creates genuinely new message distinctions rather than merely collecting more measurements.

\item We establish the connection between collective distinguishability and communication reliability. For a finite embodied alphabet observed over repeated sensing intervals, we show that unresolved statistical ambiguities and weak but nonzero distinctions lead to fundamentally different reliability behaviors: the former produce an exact asymptotic maximum-error floor, while the latter permit vanishing error with an exponent determined by the weakest symbol pair. This characterization shows that receive AP cooperation can do more than improve decoding performance; it can change whether arbitrarily reliable embodied communication is possible at all. We further extend the analysis to finite sensing budgets by deriving sufficient and necessary reliability conditions and translating them into guaranteed communication rates normalized by sensing time.

\item We develop a reliability-driven framework for jointly designing the receive network and the embodied alphabet. Because the alphabet determines which physical state distinctions must be supported and the AP set determines which distinctions are available, neither can be optimized independently of the other. Guided by the reliability theory, our design first resolves statistical ambiguities and then strengthens the weakest remaining distinctions. We show that receive AP selection exhibits complementarity rather than the standard diminishing-returns structure, and develop an efficient joint search supported by a reduced-dimensional Chernoff representation. Numerical results demonstrate that the proposed design closely approaches exact discrete benchmarks on reduced instances with substantially fewer candidate evaluations, while the resulting network cooperation reduces the sensing resources required to support reliable embodied communication.

\end{enumerate}

\begin{figure}[!t]
\centering
\includegraphics[width=0.9\linewidth]{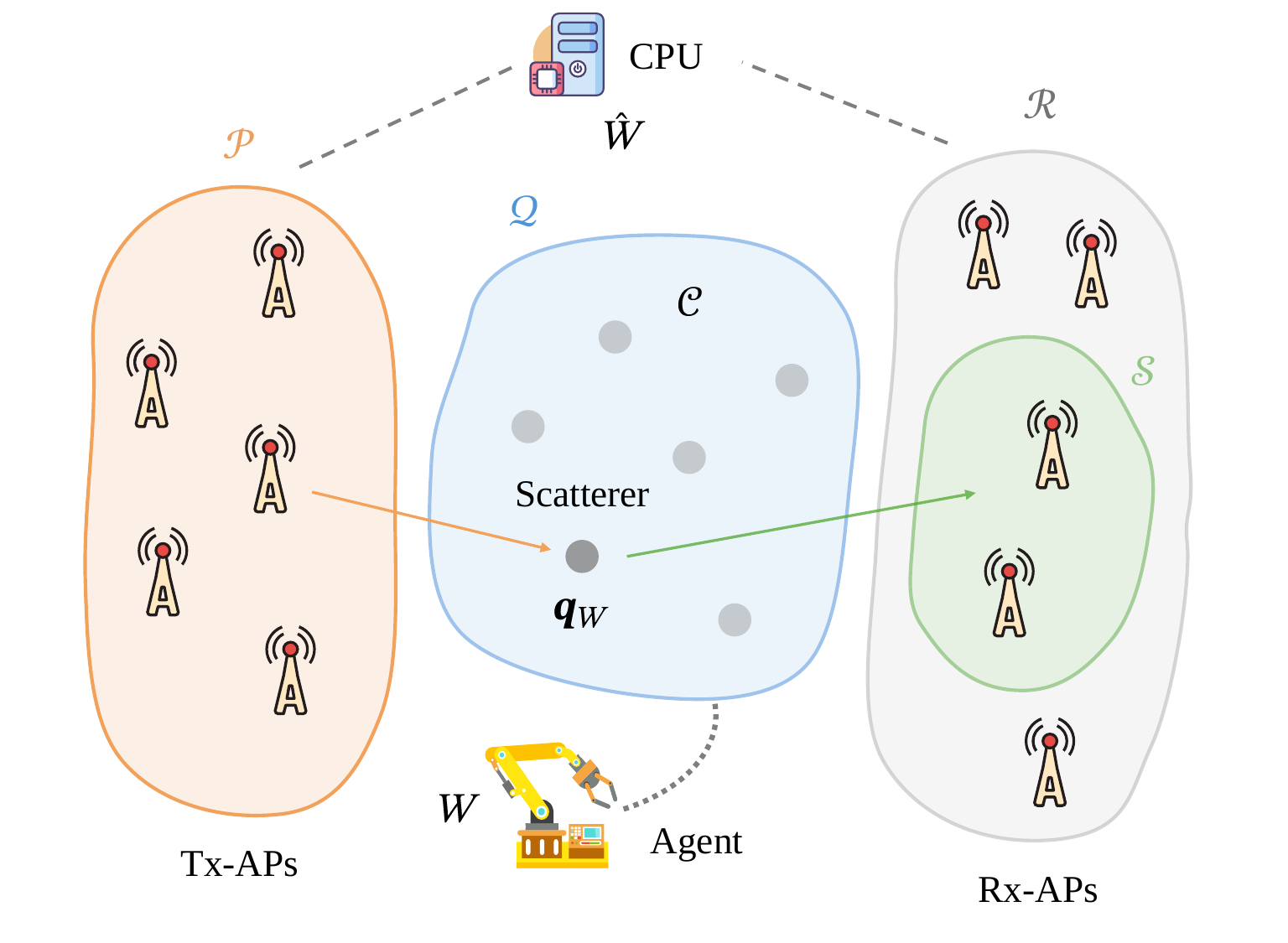}
\caption{Networked embodied communication through the position of a controllable scatterer.
The agent maps message $W$ to scatterer position $\bm q_W$.
The transmit APs illuminate the scatterer, the selected receive APs observe the resulting bistatic echoes, and the CPU jointly processes the distributed observations to recover $W$.}
\label{fig:system}
\end{figure}

\emph{Notation:} Lowercase and uppercase boldface letters denote vectors and matrices, respectively, and calligraphic letters denote sets.
The superscripts $(\cdot)^T$ and $(\cdot)^H$ denote transpose and Hermitian transpose, while $\bm0$, $\bm1$, and $\bm I$ denote an all-zero vector or matrix, an all-one vector, and an identity matrix, respectively, with dimensions clear from context.
The operators $\diag(\cdot)$ and $\blkdiag(\cdot)$ denote diagonal extraction and block-diagonal formation, respectively, and $\CN(\bm\mu,\bm\Sigma)$ denotes a proper complex Gaussian distribution with mean $\bm\mu$ and covariance $\bm\Sigma$.

\section{System Model}
\label{sec:model}

We consider the networked embodied communication system illustrated in Fig.~\ref{fig:system}. The system consists of a CPU, a set of distributed APs, and an embodied agent. The APs form a centralized cell-free ISAC network and are connected to the CPU through fronthaul links. They are assumed fully synchronized for centralized sensing \cite{behdad2024multistatic}.

The embodied agent can deliberately reposition a controllable scatterer within its accessible region. This physical degree of freedom is used for communication. Let $W\in\{1,\ldots,J\}$ denote an equiprobable message index, and let $\cC=\{\bm q_1,\ldots,\bm q_J\}$ denote the embodied alphabet, whose elements are admissible scatterer positions. To convey message $W=j$, the agent moves the scatterer to $\bm q_j$, allows it to settle, and maintains it there while the message is sensed. Each $\bm q_j$ is an embodied symbol, physically realized as the scatterer position selected by the agent. The message is therefore instantiated in the environment rather than encoded into a data-bearing RF waveform generated by the agent \cite{shao2026embodied}.

The transmit APs illuminate the scatterer, while selected receive APs observe its bistatic echoes from different viewpoints. Their observations are forwarded to the CPU, which jointly processes them to recover $W$. Thus, the information flow is $W
\to
\bm q_W
\to
\text{distributed sensing observations}
\to
\widehat W$.

\subsection{Cell-Free Sensing Infrastructure}

Our focus is how distributed receive observations change the distinguishability of embodied symbols, in particular whether additional AP observations can resolve ambiguities that persist under a given sensing illumination. Accordingly, we keep the transmit configuration fixed and treat receive AP participation as the network design variable. Let $\cP$ and $\cR$ denote the transmit AP set and the candidate receive AP set, respectively, with $\cP\cap\cR=\emptyset$. The transmit AP set $\cP$ and the probing signals remain fixed throughout the paper, while $\cS\subseteq\cR$ denotes the receive AP set selected for sensing. Changing $\cS$ therefore changes the observations available to the CPU without changing the illumination that induces them.

We use a common Cartesian coordinate system for the APs and the region accessible to the agent \cite{zhang2025polarization}. The controllable scatterer can be placed within a bounded horizontal region $\cQ\subset\R^3$ at height $h_{\rm A}$, so any $\bm q\in\cQ$ can be written as $\bm q=[q_x,q_y,h_{\rm A}]^T$. Accordingly, $\cC\subseteq\cQ$.

AP $a\in\cP\cup\cR$ is located at $\bm b_a\in\R^3$ and is equipped with an $M_1\times M_2$ uniform planar array (UPA), comprising $M=M_1M_2$ antennas with half-wavelength spacing. Let $\bm e_{a,1}$ and $\bm e_{a,2}$ denote orthonormal unit vectors along its local array axes. For $d\in\{1,2\}$, the directional cosine from AP $a$ to scatterer position $\bm q$ along its $d$th local array axis is $\mu_{a,d}(\bm q)=\bm e_{a,d}^T(\bm q-\bm b_a)/\|\bm q-\bm b_a\|$.
The normalized far-field steering vector is
\[
\bm a_a(\bm q)
=
\bm a_1\bigl(\mu_{a,1}(\bm q)\bigr)
\otimes
\bm a_2\bigl(\mu_{a,2}(\bm q)\bigr),
\]
where
\[
\bm a_d(\mu)
=
\frac{1}{\sqrt{M_d}}
\bigl[
1,e^{\jmath\pi\mu},\ldots,e^{\jmath\pi(M_d-1)\mu}
\bigr]^T.
\]

Downlink communication and sensing occupy separate intervals \cite{liu2022isacsurvey,zhang2022timeDivision}. An active sensing interval begins after the scatterer has settled at $\bm q_W$, contains $\tau_s$ probing symbols of duration $T_s$, and lasts $T_{\rm sen}=\tau_sT_s$. The scatterer remains at $\bm q_W$ throughout the interval. At symbol index $m\in\{1,\ldots,\tau_s\}$, transmit AP $t\in\cP$ sends the predetermined probing vector $\bm x_t[m]\in\C^M$, which is known to the CPU. The same probing sequence is reused across the $L$ sensing intervals used to decode one message, so $m$ indexes probing symbols within an interval, whereas $L$ counts repeated observations of the same embodied symbol.

\subsection{Multistatic Observation Model}

For embodied symbol $\bm q_j$, under the narrowband far-field multistatic sensing model \cite{behdad2024multistatic}, the bistatic response from transmit AP $t$ through the controllable scatterer to receive AP $r$ at symbol index $m$ is
\[
\bm g_{r,t,j}[m]
=
\sqrt{\beta_{r,t,j}}\,
\bm a_r(\bm q_j)
\bm a_t^T(\bm q_j)
\bm x_t[m],
\]
where $\beta_{r,t,j}\ge0$ accounts for the bistatic propagation loss and average scattering power. The remaining scattering amplitude and phase are represented by $\alpha_{r,t,j}$.

Using a fixed ordering of $\cP$, define $\bm\alpha_{r,j}=[\alpha_{r,t,j}]_{t\in\cP}\in\C^{|\cP|}$ and model $\bm\alpha_{r,j}\sim\CN(\bm0,\bm R_{\alpha,r,j})$ with $\diag(\bm R_{\alpha,r,j})=\bm1$. The covariance $\bm R_{\alpha,r,j}$ describes correlations among the normalized bistatic scattering coefficients and may vary with $\bm q_j$. The coefficients remain constant during one sensing interval \cite{swerling1960probability}. Let $\bm G_{r,j}[m]\in\C^{M\times|\cP|}$ collect $\bm g_{r,t,j}[m]$, $t\in\cP$, as columns in the same order.

The probing signals also reach receive AP $r$ through direct propagation and reflections from uncontrolled objects. Following the residual channel treatment in cell-free multistatic sensing \cite{behdad2024multistatic}, propagation components treated as known are canceled at the sensing front end. The remaining propagation from transmit AP $t$ to receive AP $r$ is represented by $\bm H_{r,t}\in\C^{M\times M}$. The residual channels are modeled as jointly zero-mean complex Gaussian and may be correlated across AP pairs. Their realizations remain constant during one sensing interval and are not assumed to be known instantaneously at the CPU.

After cancellation of the known propagation components, receive AP $r$ observes $\bm y_r[m]\in\C^M$ as
\begin{equation}\label{eq:sig2}
\bm y_r[m]
=
\sum_{t\in\cP}
\alpha_{r,t,j}\bm g_{r,t,j}[m]
+
\sum_{t\in\cP}
\bm H_{r,t}\bm x_t[m]
+
\bm n_r[m].
\end{equation}
The first term is the echo from the controllable scatterer at $\bm q_j$. The receiver noise satisfies $\bm n_r[m]\sim\CN(\bm0,\sigma_{n,r}^2\bm I_M)$, where $\sigma_{n,r}^2$ denotes the noise variance at receive AP $r$. Noise samples are independent across receive APs and probing symbols. They are also independent of the scattering coefficients and residual channels.

The second and third terms of \eqref{eq:sig2}  do not depend on $\bm q_j$, but they affect how reliably the induced echoes can be distinguished. Since $\bm H_{r,t}$ remains constant over one sensing interval, its contribution can correlate the residual observations across probing symbols. Correlation across receive APs is also allowed and will be retained in the network covariance below.

Define
\[
\bm w_r[m]
=
\sum_{t\in\cP}
\bm H_{r,t}\bm x_t[m]
+
\bm n_r[m].
\]
Over one sensing interval, collect the observations associated with the $\tau_s$ probing symbols at receive AP $r$ as $\bm y_r=[\bm y_r^T[1],\ldots,\bm y_r^T[\tau_s]]^T$ and the corresponding residual terms as $\bm w_r=[\bm w_r^T[1],\ldots,\bm w_r^T[\tau_s]]^T$. Let $D=\tau_sM$ denote the observation dimension at one receive AP. Define
\[
\bm G_{r,j}
=
\bigl[
\bm G_{r,j}^T[1],\ldots,\bm G_{r,j}^T[\tau_s]
\bigr]^T
\in\C^{D\times|\cP|}.
\]
The observation at receive AP $r$ over one sensing interval is then $\bm y_r=\bm G_{r,j}\bm\alpha_{r,j}+\bm w_r$.
We consider sensing intervals sufficiently separated relative to the coherence times of the scattering and residual channels. Their realizations are therefore modeled as independent across sensing intervals with unchanged covariance matrices.

To compare different receive AP selections under the same transmit configuration, fix an ordering of the candidate set $\cR$ and form $\bm y_{\cR}=[\bm y_r^T]_{r\in\cR}^T$. Using the same AP ordering, let $\bm\alpha_{\cR,j}$ and $\bm w_{\cR}$ collect the corresponding scattering and residual vectors, and define $\bm G_{\cR,j}=\blkdiag\{\bm G_{r,j}:r\in\cR\}$. The observation over the full candidate set is $\bm y_{\cR}=\bm G_{\cR,j}\bm\alpha_{\cR,j}+\bm w_{\cR}$.
Here, $\bm\alpha_{\cR,j}\sim\CN(\bm0,\bm R_{\alpha,\cR,j})$ and $\bm w_{\cR}\sim\CN(\bm0,\bm K(\cR))$, where $\bm K(\cR)\succ\bm0$ contains the contributions of the residual channels and receiver noise. Off-diagonal blocks of $\bm R_{\alpha,\cR,j}$ and $\bm K(\cR)$ describe correlations across receive APs, and $\bm\alpha_{\cR,j}$ and $\bm w_{\cR}$ are independent. If the scattering and residual terms are independent across receive APs, the corresponding off-diagonal blocks vanish.

For $\cS\subseteq\cR$, let $\bm P_{\cS}$ extract the observation blocks indexed by $\cS$, so that $\bm y_{\cS}=\bm P_{\cS}\bm y_{\cR}$. The resulting CPU observation has dimension $D_{\cS}\triangleq D|\cS|$. Using the same AP ordering, define $\bm G_{\cS,j}=\blkdiag\{\bm G_{r,j}:r\in\cS\}$, and let $\bm\alpha_{\cS,j}$ and $\bm w_{\cS}$ collect the corresponding scattering and residual vectors. They satisfy $\bm\alpha_{\cS,j}\sim\CN(\bm0,\bm R_{\alpha,\cS,j})$ and $\bm w_{\cS}\sim\CN(\bm0,\bm K(\cS))$, where $\bm R_{\alpha,\cS,j}$ is the corresponding principal block submatrix of $\bm R_{\alpha,\cR,j}$ and $\bm K(\cS)=\bm P_{\cS}\bm K(\cR)\bm P_{\cS}^H$. The CPU observation under receive AP set $\cS$ is therefore
\begin{equation}
\bm y_{\cS}
=
\bm G_{\cS,j}\bm\alpha_{\cS,j}
+
\bm w_{\cS}.
\label{eq:selected_network_observation}
\end{equation}

\subsection{Network Observation Statistics}

For a given embodied symbol, the CPU observes a random sensing vector rather than a deterministic received signal. Marginalizing the Gaussian scattering and residual terms in \eqref{eq:selected_network_observation} gives
\begin{equation}
\begin{aligned}
\bm y_{\cS}\mid W=j
&\sim
\CN\bigl(\bm0,\bm\Sigma_j(\cS)\bigr),\\
\bm\Sigma_j(\cS)
&=
\bm G_{\cS,j}
\bm R_{\alpha,\cS,j}
\bm G_{\cS,j}^H
+
\bm K(\cS).
\end{aligned}
\label{eq:block_cov}
\end{equation}
Let $p_j^{\cS}(\bm y)$ denote the density of $\bm y_{\cS}$ under $W=j$. When the receive AP set is clear, we simply write $p_j(\bm y)$. The first term in $\bm\Sigma_j(\cS)$ depends on $\bm q_j$, whereas $\bm K(\cS)$ is common to all embodied symbols. The CPU is assumed to know $\bm\Sigma_j(\cR)$ for each candidate scatterer position, for example from the statistical model, calibration, or measurements collected over time.

Across the repeated sensing intervals, the independence assumption above gives conditionally i.i.d. observations with density $p_j^{\cS}$. For $\cS$, the embodied alphabet $\cC$ therefore induces the family of observation distributions $\{p_j^{\cS}\}_{j=1}^{J}$ at the CPU. Reliable decoding depends first on whether different embodied symbols induce distinct distributions and, when they do, on how strongly these distributions are separated. Changing $\cS$ does not change the positions in $\cC$, but it can change the statistical distinctions available to the CPU.

Receive AP selection retains the corresponding blocks of the reference observation over $\cR$. Hence, $\bm\Sigma_j(\cS)=\bm P_{\cS}\bm\Sigma_j(\cR)\bm P_{\cS}^H$. For $\cS_1\subseteq\cS_2$, the observation under $\cS_1$ is therefore a marginal of the observation under $\cS_2$. This nested structure allows us to characterize the additional distinguishability contributed by new receive APs relative to the observations already available under $\cS_1$.

\section{Collective Distinguishability under Distributed Sensing}
\label{sec:distinguishability}

The embodied communication framework in \cite{shao2026embodied} interprets sensing resolution through the distinguishability of intentionally selected physical states. In its RF realization, the pairwise distinction between two embodied symbols is characterized by the Bhattacharyya distance between their induced sensing distributions. With distributed sensing, the CPU observes each embodied symbol through multiple receive APs, giving a joint Gaussian distribution characterized by $\bm\Sigma_j(\cS)$. The relevant statistical distinction is therefore determined by these joint observation distributions and the selected receive AP set.

Chernoff information provides a natural measure for characterizing this pairwise statistical separation. Its optimizing parameter is determined by the statistical structure of the two distributions. This formulation allows the role of the Bhattacharyya distance in the RF model of \cite{shao2026embodied} to be examined within the more general distributed sensing model, and provides the pairwise characterization needed to study distinguishability across the embodied alphabet.

\subsection{Pairwise Chernoff Information}
\label{subsec:pairwise_chernoff_information}

Fix a receive AP set $\cS$ and consider embodied symbols $\bm q_i$ and $\bm q_j$. For brevity, write $\bm y=\bm y_{\cS}\in\C^{D_{\cS}}$ and $\bm\Sigma_k=\bm\Sigma_k(\cS)$, $k\in\{i,j\}$. The corresponding hypotheses are $\mathcal H_i:\bm y\sim\CN(\bm0,\bm\Sigma_i)$ and $\mathcal H_j:\bm y\sim\CN(\bm0,\bm\Sigma_j)$.
Although the CPU ultimately decodes among all $J$ embodied symbols, the induced binary problem isolates the intrinsic statistical separation of one pair independently of the remaining symbols. For equal priors, its minimum binary error probability is
\[
P_{e,ij}^{\star}
=
\frac{1}{2}
\int
\min\{p_i(\bm y),p_j(\bm y)\}
\dd\bm y.
\]

Chernoff information characterizes the overlap between the two induced distributions. It yields an exponential upper bound on the binary error probability and, under independent repetitions of the same embodied symbol, gives the exact asymptotic binary error exponent \cite{poor1994introduction}.

\begin{defi}[Chernoff Information~\cite{chernoff1952measure}]
For $s\in[0,1]$, define
\begin{equation*}
\begin{aligned}
C_{ij}(s;\cS)
&=
-\log
\int
p_i^s(\bm y)
p_j^{1-s}(\bm y)
\dd\bm y,\\
C_{ij}^{\star}(\cS)
&=
\max_{0\le s\le1}
C_{ij}(s;\cS).
\end{aligned}
\end{equation*}
\end{defi}

When $\cS$ is fixed, we write $C_{ij}(s)$ and $C_{ij}^{\star}$ for brevity. Since $\min\{p_i(\bm y),p_j(\bm y)\}\le p_i^s(\bm y)p_j^{1-s}(\bm y)$, the Chernoff bound gives $P_{e,ij}^{\star}\le\frac{1}{2}\exp[-C_{ij}(s;\cS)]$ for every $s\in[0,1]$. Maximizing $C_{ij}(s;\cS)$ therefore minimizes this exponential upper bound over the Chernoff family and gives $P_{e,ij}^{\star}\le\frac{1}{2}\exp[-C_{ij}^{\star}(\cS)]$. Thus, optimizing $s$ is relevant not only to the asymptotic exponent but also to obtaining the tightest bound within this family.

\begin{lem}[Gaussian Chernoff Form~\cite{dogandzic2003chernoff}]
\label{lem:gaussian_chernoff}
Let $\lambda_1,\ldots,\lambda_{D_{\cS}}$ denote the generalized eigenvalues of $(\bm\Sigma_j,\bm\Sigma_i)$, equivalently the eigenvalues of $\bm\Sigma_i^{-1/2}\bm\Sigma_j\bm\Sigma_i^{-1/2}$. Then
\begin{equation}
\begin{aligned}
C_{ij}(s)
&=
\log\det\bigl((1-s)\bm\Sigma_i+s\bm\Sigma_j\bigr)\\
&\quad
-(1-s)\log\det\bm\Sigma_i
-s\log\det\bm\Sigma_j\\
&=
\sum_{d=1}^{D_{\cS}}
\left[
\log\bigl((1-s)+s\lambda_d\bigr)
-s\log\lambda_d
\right].
\end{aligned}
\label{eq:gaussian_chernoff}
\end{equation}
If $\bm\Sigma_i\ne\bm\Sigma_j$, $C_{ij}(s)$ is strictly concave on $[0,1]$, and its unique maximizer is the root of
\begin{equation*}
\sum_{d=1}^{D_{\cS}}
\left[
\frac{\lambda_d-1}{(1-s)+s\lambda_d}
-\log\lambda_d
\right]
=0.
\end{equation*}
\end{lem}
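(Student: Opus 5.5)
The plan is to compute the Chernoff integral in closed form for proper complex Gaussians, diagonalize the pair by simultaneous congruence, and then read off concavity from the resulting scalar sum. The step that needs real care is the strict concavity.

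\textbf{Step 1: the integral.} Write $\bm\Sigma_s^{-1}\triangleq s\bm\Sigma_i^{-1}+(1-s)\bm\Sigma_j^{-1}$, which is positive definite for $s\in[0,1]$. Using $p_k(\bm y)=\pi^{-D_{\cS}}\det(\bm\Sigma_k)^{-1}\exp(-\bm y^H\bm\Sigma_k^{-1}\bm y)$, the integrand $p_i^sp_j^{1-s}$ is an unnormalized $\CN(\bm0,\bm\Sigma_s)$ density. Hence
\[
\int p_i^sp_j^{1-s}\,\dd\bm y=\det(\bm\Sigma_s)\det(\bm\Sigma_i)^{-s}\det(\bm\Sigma_j)^{-(1-s)}.
\]
Next, factor
\[
s\bm\Sigma_i^{-1}+(1-s)\bm\Sigma_j^{-1}=\bm\Sigma_i^{-1}\bigl(s\bm\Sigma_j+(1-s)\bm\Sigma_i\bigr)\bm\Sigma_j^{-1},
\]
so that $\det\bm\Sigma_s=\det\bm\Sigma_i\det\bm\Sigma_j/\det\bigl((1-s)\bm\Sigma_i+s\bm\Sigma_j\bigr)$. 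Taking $-\log$ and simplifying gives the first line of \eqref{eq:gaussian_chernoff}.

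\textbf{Step 2: the eigenvalue form.} Factor
\[
(1-s)\bm\Sigma_i+s\bm\Sigma_j=\bm\Sigma_i^{1/2}\bigl((1-s)\bm I+s\bm\Lambda'\bigr)\bm\Sigma_i^{1/2},
\qquad \bm\Lambda'=\bm\Sigma_i^{-1/2}\bm\Sigma_j\bm\Sigma_i^{-1/2}.
\]
The matrix $\bm\Lambda'$ is Hermitian positive definite with eigenvalues $\lambda_d>0$. Therefore
\[
\log\det\bigl((1-s)\bm\Sigma_i+s\bm\Sigma_j\bigr)=\log\det\bm\Sigma_i+\sum_d\log\bigl((1-s)+s\lambda_d\bigr),
\qquad
\log\det\bm\Sigma_j=\log\det\bm\Sigma_i+\sum_d\log\lambda_d .
\]
Substituting, the $\log\det\bm\Sigma_i$ terms cancel and the sum form follows.

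\textbf{Step 3: strict concavity.} Each summand $f_d(s)=\log(1+s(\lambda_d-1))-s\log\lambda_d$ has
\[
f_d''(s)=-\frac{(\lambda_d-1)^2}{\bigl(1+s(\lambda_d-1)\bigr)^2}\le0,
\]
with strict inequality exactly when $\lambda_d\neq1$. If $\bm\Sigma_i\ne\bm\Sigma_j$, then $\bm\Lambda'\ne\bm I$. Since $\bm\Lambda'$ is Hermitian, hence diagonalizable, some $\lambda_d\ne1$, and so $C_{ij}''(s)<0$ on $[0,1]$.

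\textbf{Step 4: the maximizer.} Because $C_{ij}(0)=C_{ij}(1)=0$ and $C_{ij}$ is strictly concave, $C_{ij}(s)>0$ on the interior. The maximum is therefore attained at a unique interior point, where $C_{ij}'(s)=0$. Differentiating the sum form, $C_{ij}'(s)=\sum_d\bigl[(\lambda_d-1)/\bigl((1-s)+s\lambda_d\bigr)-\log\lambda_d\bigr]$, which is the stated equation.

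Strict concavity also gives uniqueness of the root. $C_{ij}'$ is strictly decreasing, and it satisfies $C_{ij}'(0)=\sum_d(\lambda_d-1-\log\lambda_d)>0$ and $C_{ij}'(1)=\sum_d(1-1/\lambda_d-\log\lambda_d)<0$. Both signs follow from $\log x<x-1$ for $x\ne1$, applied to $x=\lambda_d$ and $x=1/\lambda_d$ respectively. So $C_{ij}'$ has exactly one root in $(0,1)$.
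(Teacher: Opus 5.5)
Your proof is correct and complete. The paper never proves this lemma; it cites it from \cite{dogandzic2003chernoff}, so there is no in-paper argument to match. Your derivation is the standard one, and your Step~1 is the same Gaussian tilting the paper itself uses in its supplementary proof of Theorem~\ref{thm:additional_ap_distinguishability}. There, $p_i^sp_j^{1-s}$ is normalized to $\CN(\bm 0,\bm Q_{ij}(s))$ with $\bm Q_{ij}(s)=[s\bm\Sigma_{y,i}^{-1}+(1-s)\bm\Sigma_{y,j}^{-1}]^{-1}$.

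What your Step~4 adds beyond the statement is the explicit proof that the maximizer is interior. This follows from $C_{ij}>0$ on $(0,1)$, or equivalently from $C_{ij}'(0)>0>C_{ij}'(1)$. The paper relies on this fact without comment, both when it takes $s_2^{\star}\in(0,1)$ after Theorem~\ref{thm:additional_ap_distinguishability} and when it asserts in that theorem's proof that the maximizer of $C_{ij}(s;\cS_1)$ lies in $(0,1)$.
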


The generalized eigenvalues characterize the covariance mismatch between the two hypotheses \cite{li2018topological}. A value $\lambda_d=1$ means that the corresponding generalized eigenmode has identical second order statistics under the two hypotheses and therefore contributes no Chernoff information. Only modes with $\lambda_d\ne1$ contribute to pairwise distinguishability. For distinct distributions, the strict concavity in Lemma~\ref{lem:gaussian_chernoff} reduces the optimization of the Chernoff parameter to a one dimensional root search \cite{nielsen2022revisiting}.

\begin{rem}[Relation to the Bhattacharyya Distance in \cite{shao2026embodied}]
The Bhattacharyya distance is the midpoint value $B_{ij}=C_{ij}(1/2)$ \cite{bhattacharyya1943measure}. For the sensing model with a single viewpoint in \cite{shao2026embodied}, $\bm\Sigma_j=\sigma^2(\bm I+\gamma\bm a_j\bm a_j^H)$ with $\|\bm a_j\|=1$, where the effective sensing SNR $\gamma$ is common to all embodied symbols. With $\eta_{ij}=|\bm a_i^H\bm a_j|^2$, Lemma~\ref{lem:gaussian_chernoff} reduces to
\[
C_{ij}(s)
=
\log
\left[
1+
\frac{\gamma^2}{1+\gamma}
s(1-s)
(1-\eta_{ij})
\right].
\]
The common effective SNR and normalized rank one covariance structure make this expression symmetric about $s=1/2$. Hence, $C_{ij}^{\star}=C_{ij}(1/2)=B_{ij}$, which recovers the Bhattacharyya distinguishability measure used in \cite{shao2026embodied}.

The general distributed sensing covariance in \eqref{eq:block_cov} does not retain this symmetry because the bistatic gains, array responses, probing illumination, and scattering correlations can vary with the embodied symbol. Therefore, $C_{ij}^{\star}(\cS)\ge C_{ij}(1/2;\cS)=B_{ij}(\cS)$, and optimizing $s$ gives a Chernoff bound no looser than the Bhattacharyya bound, with a strict improvement whenever the unique maximizer satisfies $s^{\star}\ne1/2$.
\end{rem}

For the Gaussian laws with positive definite covariance matrices considered here, $C_{ij}^{\star}(\cS)=0\Longleftrightarrow p_i^{\cS}=p_j^{\cS}\Longleftrightarrow\bm\Sigma_i(\cS)=\bm\Sigma_j(\cS)$. Thus, physically different scatterer positions can induce the same observation distribution under a given receive AP set. Zero Chernoff information marks a missing statistical distinction under $\cS$, whereas a positive value quantifies a distinction that is present. Receive AP complementarity is therefore determined by whether additional observations resolve an unresolved pair or strengthen a distinction already available at the CPU.

\subsection{Receive AP Complementarity}
\label{subsec:receive_ap_complementarity}

To record which distinctions remain unresolved under a receive AP set, fix $\cC$ and $\cS$ and write $i\sim_{\cS}j$ when symbols $i$ and $j$ are statistically indistinguishable under $\cS$. We call such symbols \emph{equivalent} under $\cS$. Let $\Pi_{\cS}$ denote the resulting equivalence partition of $\{1,\ldots,J\}$. The embodied alphabet $\cC$ is called \emph{identifiable} under $\cS$ if every equivalence class in $\Pi_{\cS}$ is a singleton.

\begin{example}[Complementary Receive APs]
\label{ex:complementary_viewpoints}
Consider the three embodied symbols illustrated in Fig.~\ref{fig:ap_complementarity}. Receive AP $r_1$ cannot distinguish $\bm q_2$ from $\bm q_3$, while receive AP $r_2$ cannot distinguish $\bm q_1$ from $\bm q_2$. Thus,
\[
\Pi_{\{r_1\}}
=
\bigl\{
\{1\},\{2,3\}
\bigr\},
\qquad
\Pi_{\{r_2\}}
=
\bigl\{
\{1,2\},\{3\}
\bigr\},
\]
whereas their joint observation gives
\[
\Pi_{\{r_1,r_2\}}
=
\bigl\{
\{1\},\{2\},\{3\}
\bigr\}.
\]
Neither AP identifies the alphabet alone, but their observations identify it collectively.
\end{example}

\begin{figure}[t]
\centering
\includegraphics[width=0.80\linewidth]{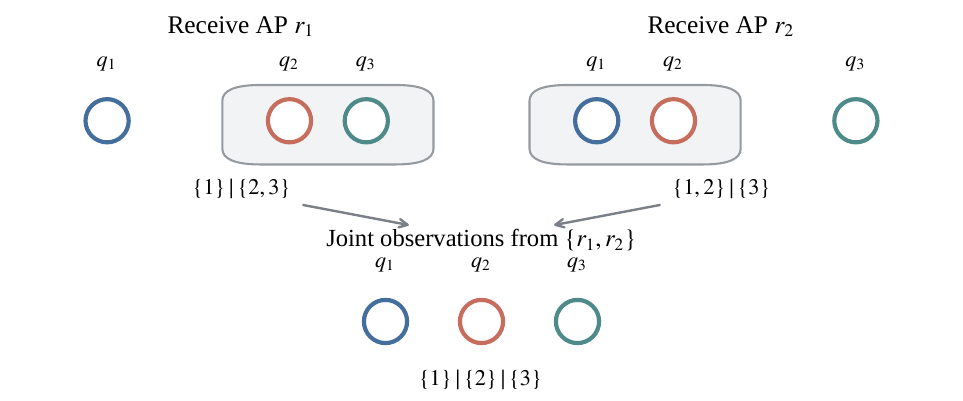}
\caption{Receive AP complementarity. Each AP leaves a different symbol pair statistically indistinguishable, whereas their joint observations make the embodied alphabet identifiable.}
\label{fig:ap_complementarity}
\end{figure}

Example~\ref{ex:complementary_viewpoints} shows that receive APs can provide complementary distinctions: neither AP identifies the embodied alphabet alone, yet each resolves an ambiguity left by the other. More generally, the value of an additional receive AP depends on what statistical distinction it contributes beyond the observations already available at the CPU.

This incremental contribution can be characterized because nested receive AP sets correspond to marginals of the same network observation model established in Section~\ref{sec:model}. Consider nonempty receive AP sets $\emptyset\ne\cS_1\subset\cS_2$. Let $\bm y$ collect the observations over $\cS_1$, and let $\bm z$ collect the additional observations from $\cS_2\setminus\cS_1$. The relevant question is whether $\bm z$ provides a statistical distinction between $\bm q_i$ and $\bm q_j$ beyond that already contained in $\bm y$.

Under hypothesis $\mathcal H_k$, $k\in\{i,j\}$, partition the joint covariance of $(\bm y,\bm z)$ as
\[
\bm\Sigma_k(\cS_2)
=
\begin{bmatrix}
\bm\Sigma_{y,k} & \bm\Sigma_{yz,k}\\
\bm\Sigma_{zy,k} & \bm\Sigma_{z,k}
\end{bmatrix}.
\]
Here, $\bm\Sigma_{y,k}$ and $\bm\Sigma_{z,k}$ are the marginal covariances of the existing and additional observations, while $\bm\Sigma_{zy,k}=\bm\Sigma_{yz,k}^H$ is their cross covariance. Physically, this cross covariance describes the statistical fluctuations shared across the two receive AP groups under embodied symbol $k$, which can arise from correlated scattering and residual propagation components.

Although the joint observation has zero mean under both hypotheses, conditioning the additional observations on those already available gives
\begin{equation}
\bm z\mid\bm y,\mathcal H_k
\sim
\CN(\bm A_k\bm y,\bm V_k),
\label{eq:conditional_innovation_model}
\end{equation}
where $\bm A_k=\bm\Sigma_{zy,k}\bm\Sigma_{y,k}^{-1}$ and $\bm V_k=\bm\Sigma_{z,k}-\bm A_k\bm\Sigma_{yz,k}$.
For the jointly Gaussian model, $\bm A_k\bm y$ is the conditional prediction of the additional observations from those already available at the CPU, and $\bm V_k$ is the covariance of the remaining innovation after this predictable component is removed. Hence, an embodied symbol can be distinguished not only through the marginal statistics at a new receive AP, but also through how that observation is statistically related to the observations already collected by the network.

Conditional Chernoff information has previously been used to characterize additional information from mixed observations in compressed hypothesis testing \cite{xu2013compressed}. Here, the observations are nested blocks from correlated receive APs. The following theorem quantifies the additional pairwise distinguishability contributed by these observations and gives the exact condition under which they are redundant.

\begin{thm}[Distinguishability Contribution of Additional Receive APs]
\label{thm:additional_ap_distinguishability}
Let $\emptyset\ne\cS_1\subset\cS_2$.

\emph{1) Chernoff function increment:}
For any embodied symbol pair $(i,j)$ and $s\in(0,1)$,
\begin{equation}
C_{ij}(s;\cS_2)-C_{ij}(s;\cS_1)
=
\Delta_{ij}^{\rm cov}(s)
+
\Delta_{ij}^{\rm mean}(s),
\label{eq:chernoff_function_increment}
\end{equation}
where
\begin{equation*}
\begin{aligned}
\Delta_{ij}^{\rm cov}(s)
&=
\log\det\overline{\bm V}_{ij}(s)
-(1-s)\log\det\bm V_i\\
&\quad
-s\log\det\bm V_j,\\
\Delta_{ij}^{\rm mean}(s)
&=
\log\det
\bigl(
\overline{\bm V}_{ij}(s)
+
\bm\Omega_{ij}(s)
\bigr)\\
&\quad
-\log\det\overline{\bm V}_{ij}(s),
\end{aligned}
\end{equation*}
with
\[
\begin{aligned}
\overline{\bm V}_{ij}(s)
&=(1-s)\bm V_i+s\bm V_j,\\
\bm Q_{ij}(s)
&=
\left[
s\bm\Sigma_{y,i}^{-1}
+
(1-s)\bm\Sigma_{y,j}^{-1}
\right]^{-1},\\
\bm\Omega_{ij}(s)
&=
s(1-s)
(\bm A_i-\bm A_j)
\bm Q_{ij}(s)
(\bm A_i-\bm A_j)^H.
\end{aligned}
\]

\emph{2) Redundancy condition:}
Both $\Delta_{ij}^{\rm cov}(s)$ and $\Delta_{ij}^{\rm mean}(s)$ are nonnegative, and
\begin{equation}
C_{ij}(s;\cS_2)=C_{ij}(s;\cS_1)
\quad\Longleftrightarrow\quad
\bm A_i=\bm A_j
\ \text{and}\
\bm V_i=\bm V_j.
\label{eq:conditional_redundancy_condition}
\end{equation}
Consequently,
\begin{equation}
C_{ij}^{\star}(\cS_2)
\ge
C_{ij}^{\star}(\cS_1),
\label{eq:pairwise_chernoff_monotonicity}
\end{equation}
with equality if and only if \eqref{eq:conditional_redundancy_condition} holds.

\emph{3) Identifiability:}
The equivalence partition $\Pi_{\cS_2}$ refines $\Pi_{\cS_1}$, and for any $1\le i<j\le J$,
\begin{equation}
i\sim_{\cS_2}j
\quad\Longleftrightarrow\quad
i\sim_{\cS_1}j,\ 
\bm A_i=\bm A_j,
\ \text{and}\
\bm V_i=\bm V_j.
\label{eq:identifiability_under_ap_expansion}
\end{equation}
Hence, $\cC$ is identifiable under $\cS_2$ if and only if every $i<j$ satisfying $i\sim_{\cS_1}j$ also satisfies $\bm A_i\ne\bm A_j$ or $\bm V_i\ne\bm V_j$.
\end{thm}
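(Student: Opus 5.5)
The natural route is to factor the joint densities via the conditional model \eqref{eq:conditional_innovation_model}. Write $p_k^{\cS_2}(\bm y,\bm z)=p_k^{\cS_1}(\bm y)\,f_k(\bm z\mid\bm y)$ with $f_k=\CN(\bm A_k\bm y,\bm V_k)$. Then
\[
\int p_i^s p_j^{1-s}\,\dd\bm y\,\dd\bm z=\int p_i^s(\bm y)p_j^{1-s}(\bm y)\left[\int f_i^s f_j^{1-s}\,\dd\bm z\right]\dd\bm y .
\]
The inner integral is a Chernoff coefficient between two complex Gaussians with different means $\bm A_i\bm y$, $\bm A_j\bm y$ and covariances $\bm V_i,\bm V_j$. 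By the standard Gaussian computation, obtained by completing the square, it equals
\[
\exp\!\bigl[-\Delta^{\rm cov}_{ij}(s)\bigr]\exp\!\bigl[-s(1-s)\,\bm y^H(\bm A_i-\bm A_j)^H\overline{\bm V}_{ij}(s)^{-1}(\bm A_i-\bm A_j)\bm y\bigr].
\]
Here $\Delta^{\rm cov}_{ij}(s)$ is exactly the log-det expression in the statement. One way to see this is to note that it is the zero-mean Chernoff form of Lemma~\ref{lem:gaussian_chernoff} applied to $(\bm V_i,\bm V_j)$.

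The remaining $\bm y$-integral is $\int p_i^s p_j^{1-s}\exp(-\bm y^H\bm B\bm y)\,\dd\bm y$, with $\bm B=s(1-s)\bm D^H\overline{\bm V}^{-1}\bm D$ and $\bm D=\bm A_i-\bm A_j$. Combining the exponents gives a Gaussian integral with precision $\bm Q_{ij}(s)^{-1}+\bm B$. Relative to the $\bm B=\bm 0$ integral, which gives $\exp[-C_{ij}(s;\cS_1)]$, it therefore contributes the factor $\det(\bm I+\bm Q\bm B)^{-1}$. By Sylvester's determinant identity,
\[
\det(\bm I+\bm Q\bm B)=\det\bigl(\bm I+s(1-s)\overline{\bm V}^{-1}\bm D\bm Q\bm D^H\bigr)=\det(\overline{\bm V}+\bm\Omega)/\det\overline{\bm V}.
\]
Taking $-\log$ yields \eqref{eq:chernoff_function_increment}. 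The main bookkeeping obstacle is this step. It requires tracking the complex-Gaussian normalizations: no factor $1/2$ and $\pi^{-D}$ constants, which cancel. It also requires confirming that $\bm Q_{ij}(s)$ is indeed the covariance of the normalized product $p_i^sp_j^{1-s}$.

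For part 2, nonnegativity of $\Delta^{\rm cov}$ is strict concavity of $\log\det$ on positive definite matrices. Its zero set for $s\in(0,1)$ is exactly $\bm V_i=\bm V_j$, equivalently the generalized-eigenvalue form of Lemma~\ref{lem:gaussian_chernoff} with each term $\log(1-s+s\lambda)-s\log\lambda\ge0$ and equality iff $\lambda=1$. For $\Delta^{\rm mean}$, note that $\bm\Omega\succeq\bm0$ since $\bm Q\succ\bm0$, so $\det(\overline{\bm V}+\bm\Omega)\ge\det\overline{\bm V}$. Equality holds iff $\bm\Omega=\bm0$, iff $\bm D\bm Q^{1/2}=\bm0$, iff $\bm A_i=\bm A_j$. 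Together these give \eqref{eq:conditional_redundancy_condition}.

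For the maximized version, take $s^\star$ to be a maximizer for $\cS_1$. Then $C^\star(\cS_2)\ge C(s^\star;\cS_2)\ge C(s^\star;\cS_1)=C^\star(\cS_1)$, which gives \eqref{eq:pairwise_chernoff_monotonicity}. The endpoints $s\in\{0,1\}$ give zero and are handled trivially. If equality holds in \eqref{eq:pairwise_chernoff_monotonicity}, two cases arise:
\begin{itemize}
\item When $C^\star(\cS_1)>0$, the maximizer $s^\star$ lies in $(0,1)$ by strict concavity and zero endpoints, so equality forces the increment to vanish at $s^\star$, hence \eqref{eq:conditional_redundancy_condition}.
\item When $C^\star(\cS_1)=0$, equality means $C^\star(\cS_2)=0$, so the increment vanishes at every $s$, again giving \eqref{eq:conditional_redundancy_condition}.
\end{itemize}
The converse is immediate from the increment formula.

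For part 3, use the fact that $i\sim_{\cS}j$ iff $C^\star_{ij}(\cS)=0$. Since $C^\star_{ij}(\cS_2)=0$ iff $C^\star_{ij}(\cS_1)=0$ and the increment vanishes, \eqref{eq:identifiability_under_ap_expansion} follows. A more direct argument is also available: the pair $(\bm\Sigma_{y,k},\bm A_k,\bm V_k)$ determines $\bm\Sigma_k(\cS_2)$ and conversely. Refinement holds because $i\sim_{\cS_2}j$ implies $i\sim_{\cS_1}j$, and the identifiability statement is the contrapositive applied to all pairs.
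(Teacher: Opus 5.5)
Your proposal is correct and follows the paper's proof essentially step for step. Both use the factorization through the conditional model \eqref{eq:conditional_innovation_model}, with the inner Gaussian coefficient producing $\exp[-\Delta_{ij}^{\rm cov}(s)]$ times a quadratic form in $\bm y$, the tilted Gaussian $\CN(\bm0,\bm Q_{ij}(s))$ with $\det(\bm I+\bm Q\bm T)^{-1}$ and Sylvester's identity for $\Delta_{ij}^{\rm mean}(s)$, the same concavity and $\bm\Omega_{ij}(s)\succeq\bm0$ arguments, and the same case split on whether $C_{ij}^{\star}(\cS_1)>0$ (your direct phrasing of the equality case is just the contrapositive of the paper's).
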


\emph{Proof:} See the supplementary material.

Theorem~\ref{thm:additional_ap_distinguishability} identifies two mechanisms that contribute to the Chernoff function increment from additional receive APs. The term $\Delta_{ij}^{\rm cov}(s)$ is positive when the conditional innovation covariance differs between the two embodied symbols. Although the original network observations are zero mean, $\Delta_{ij}^{\rm mean}(s)$ captures a difference between the conditional means $\bm A_i\bm y$ and $\bm A_j\bm y$ once the existing observations are given. This second contribution can arise from correlations across receive APs even when the marginal statistics of the added observations alone do not distinguish the two symbols.

The equality condition gives the corresponding criterion for redundancy. The added observations provide no new pairwise distinguishability if and only if $p_i(\bm z\mid\bm y)=p_j(\bm z\mid\bm y)$ for almost every $\bm y$, equivalently $\bm A_i=\bm A_j$ and $\bm V_i=\bm V_j$. Thus, the value of an additional receive AP is determined by whether its conditional observation law reveals a distinction beyond that already available at the CPU, rather than by its standalone discrimination alone. For a pair that is unresolved under $\cS_1$, any difference in this conditional law separates the pair under $\cS_2$, whereas repeated observations over $\cS_1$ cannot create such a distinction.

The decomposition in \eqref{eq:chernoff_function_increment} is exact for a fixed $s$. The optimized Chernoff information also accounts for a possible change in the maximizing parameter. If $C_{ij}^{\star}(\cS_2)>0$, let $s_2^{\star}\in(0,1)$ denote its unique maximizer. Then
\[
\begin{aligned}
C_{ij}^{\star}(\cS_2)
-
C_{ij}^{\star}(\cS_1)
&=
\Delta_{ij}^{\rm cov}(s_2^{\star})
+
\Delta_{ij}^{\rm mean}(s_2^{\star})\\
&\quad
-
\left[
C_{ij}^{\star}(\cS_1)
-
C_{ij}(s_2^{\star};\cS_1)
\right].
\end{aligned}
\]
The bracketed term is nonnegative because the Chernoff parameter is optimized separately for $\cS_1$. Thus, $\Delta_{ij}^{\rm cov}$ and $\Delta_{ij}^{\rm mean}$ exactly decompose the Chernoff function increment at a fixed parameter, while the increase in optimized Chernoff information also reflects the possible change in the maximizing $s$. The bracketed term vanishes when $s_2^{\star}$ also maximizes $C_{ij}(s;\cS_1)$. If $C_{ij}^{\star}(\cS_2)=0$, both hypotheses remain identical under $\cS_2$ and all these increments are zero.

When the receive AP observations are independent under every embodied symbol, the conditional structure simplifies. For disjoint receive AP subsets, $\bm\Sigma_{zy,k}=\bm0$, so $\bm A_k=\bm0$, $\bm V_k=\bm\Sigma_{z,k}$, and $\Delta_{ij}^{\rm mean}(s)=0$. The joint observation density also factorizes across receive APs. Hence, a symbol pair is distinguishable under $\cS$ if and only if at least one receive AP in $\cS$ distinguishes that pair.

\begin{cor}[Pair Coverage under Independent APs]
\label{cor:independent_identifiability}
Suppose the receive AP observations are independent under every embodied symbol. For each receive AP $r$, let
\[
\mathcal D_r
=
\left\{
(i,j):
1\le i<j\le J,\;
\bm\Sigma_i(\{r\})
\ne
\bm\Sigma_j(\{r\})
\right\}
\]
denote the set of symbol pairs distinguished by its local observation. Then $\cC$ is identifiable under $\cS$ if and only if
\begin{equation*}
\bigcup_{r\in\cS}\mathcal D_r
=
\{(i,j):1\le i<j\le J\}.
\end{equation*}
\end{cor}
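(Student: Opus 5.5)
The plan is to reduce identifiability to a pairwise statement and then use independence to factorize each pairwise covariance comparison into per-AP comparisons. By definition, $\cC$ is identifiable under $\cS$ exactly when no pair $i<j$ satisfies $i\sim_{\cS}j$. For the Gaussian laws with positive definite covariances considered here, $i\sim_{\cS}j$ holds if and only if $\bm\Sigma_i(\cS)=\bm\Sigma_j(\cS)$. So it suffices to show, for each fixed pair $(i,j)$,
\[
\bm\Sigma_i(\cS)=\bm\Sigma_j(\cS)
\quad\Longleftrightarrow\quad
\bm\Sigma_i(\{r\})=\bm\Sigma_j(\{r\})\ \text{for all } r\in\cS,
\]
which is the same as saying $(i,j)\notin\bigcup_{r\in\cS}\mathcal D_r$. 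Taking this over all pairs gives the stated union condition.

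The key step uses the independence assumption. Independence of the receive AP observations under every embodied symbol means that, for each $k$, the off-diagonal blocks of $\bm\Sigma_k(\cS)$ vanish. Hence
\[
\bm\Sigma_k(\cS)=\blkdiag\{\bm\Sigma_k(\{r\}):r\in\cS\},
\]
where each diagonal block $\bm\Sigma_k(\{r\})=\bm P_{\{r\}}\bm\Sigma_k(\cR)\bm P_{\{r\}}^H$ is the single-AP covariance. Two block-diagonal matrices with the same block partition are equal if and only if all their diagonal blocks are equal, and this gives the displayed equivalence directly.

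As a cross-check, and as an alternative route, I can apply Theorem~\ref{thm:additional_ap_distinguishability} by induction on $|\cS|$, adding one AP at a time. With independent blocks, $\bm A_k=\bm0$ and $\bm V_k=\bm\Sigma_k(\{r\})$ for the added AP $r$. Part~3 then says a pair that is equivalent under $\cS_1$ stays equivalent under $\cS_1\cup\{r\}$ if and only if $\bm\Sigma_i(\{r\})=\bm\Sigma_j(\{r\})$, that is, $(i,j)\notin\mathcal D_r$. Iterating this over the APs in $\cS$ reproduces the union characterization; the base case is a single AP, where the claim holds by definition.

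I do not expect a real obstacle. The only point needing care is justifying that "independent under every symbol" forces zero cross-covariance blocks for each $k$ separately. That is immediate for jointly Gaussian vectors, since independence is equivalent to zero cross-covariance there.
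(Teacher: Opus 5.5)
Your proof is correct and is essentially the paper's argument. The paper factorizes the joint density into the product of the local AP densities and observes that two symbols induce the same joint law iff their local laws agree at every selected AP. For zero-mean Gaussians this is exactly your observation that $\bm\Sigma_k(\cS)=\blkdiag\{\bm\Sigma_k(\{r\}):r\in\cS\}$, and that two such block-diagonal matrices coincide iff all their diagonal blocks coincide. Your inductive cross-check via part~3 of Theorem~\ref{thm:additional_ap_distinguishability}, with $\bm A_k=\bm0$ and $\bm V_k=\bm\Sigma_k(\{r\})$, is also valid and mirrors the remark the paper makes just before the corollary.
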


\emph{Proof:} See the supplementary material.

Corollary~\ref{cor:independent_identifiability} gives a pair coverage interpretation of identifiability under independent receive AP observations. Each receive AP distinguishes a subset of symbol pairs, and the selected APs identify the embodied alphabet exactly when every pair is distinguished by at least one of them. This simple coverage characterization generally no longer holds when receive AP observations are correlated, because the contribution of an added observation can depend on its conditional relation to the observations already available.

Taken together, $\Pi_{\cS}$ records which symbol distinctions are absent under the selected observations, while $C_{ij}^{\star}(\cS)$ quantifies the strength of distinctions that are present. Receive AP cooperation can alter both by providing additional pairwise distinguishability beyond the observations already at the CPU. The next section fixes the receive AP set and characterizes how these statistical distinctions determine communication reliability over repeated sensing intervals.

\section{Reliability of the Embodied Alphabet}
\label{sec:reliability}

Fix a receive AP set $\cS$ and consider the transmission of a message $W$ through an embodied symbol $\bm q_W$ that is held fixed over $L$ sensing intervals. By the conditional independence model in Section~\ref{sec:model}, the interval observations are i.i.d. conditioned on $W$. Let $\bm y^{(\ell)}=\bm y_{\cS}^{(\ell)}$ denote the observation in interval $\ell$, define $\bm Y^{(L)}=(\bm y^{(1)},\ldots,\bm y^{(L)})$, and let $p_j^{(L)}$ denote its density under $W=j$, with dependence on $\cS$ suppressed. Then $p_j^{(L)}(\bm Y^{(L)})=\prod_{\ell=1}^{L}p_j^{\cS}(\bm y^{(\ell)})$. Hence, $p_i^{(L)}=p_j^{(L)}$ if and only if $p_i^{\cS}=p_j^{\cS}$, so repetition preserves the equivalence partition $\Pi_{\cS}$, while the pairwise Chernoff function and Chernoff information scale to $L C_{ij}(s;\cS)$ and $L C_{ij}^{\star}(\cS)$, respectively.

We use the maximum error probability so that reliability cannot be improved by sacrificing a particularly difficult embodied symbol. For a decoder $\delta$, define
\[
P_{\max}(\delta;L,\cC,\cS)
=
\max_{1\le j\le J}
\Pr\{\delta(\bm Y^{(L)})\ne j\mid W=j\}.
\]
The optimal maximum error probability is
\[
P_{\max}^{\star}(L;\cC,\cS)
=
\inf_{\delta}
P_{\max}(\delta;L,\cC,\cS).
\]
Randomized decoders are allowed, which permits the decision probabilities to be balanced among statistically indistinguishable symbols. The maximum likelihood (ML) decoder is
\[
\delta_{\rm ML}(\bm Y^{(L)})
=
\argmax_{1\le j\le J}
p_j^{(L)}(\bm Y^{(L)}),
\]
and its maximum error probability is denoted by $P_{\max}^{\rm ML}(L;\cC,\cS)$. Hence, $P_{\max}^{\star}(L;\cC,\cS)\le P_{\max}^{\rm ML}(L;\cC,\cS)$.

\subsection{Asymptotic Reliability}
\label{subsec:asymptotic_reliability}

The pairwise relations above do not yet determine the maximum error probability of the entire embodied alphabet. Define the largest equivalence class size as $\kappa_{\max}(\cC,\cS)=\max_{\mathcal E\in\Pi_{\cS}}|\mathcal E|$. An equivalence class contains physically different embodied symbols that induce exactly the same observation distribution under $\cS$, so $\kappa_{\max}$ measures the largest unresolved multiplicity in the alphabet.

For $J\ge2$, define the minimum pairwise Chernoff information as $C_{\min}^{\star}(\cC,\cS)=\min_{1\le i<j\le J}C_{ij}^{\star}(\cS)$. The alphabet is identifiable if and only if $\kappa_{\max}(\cC,\cS)=1$, equivalently $C_{\min}^{\star}(\cC,\cS)>0$. Once this condition holds, the pair or pairs attaining $C_{\min}^{\star}$ are the bottleneck pairs of the embodied alphabet.

Classical Chernoff theory characterizes the asymptotic discrimination of distinct hypotheses under independent repetitions \cite{chernoff1952measure,poor1994introduction}. The resulting reliability depends first on whether the selected receive APs leave any embodied symbols statistically indistinguishable. Repeated sensing then exhibits two different asymptotic behaviors, while receive AP expansion can change the regime itself. Fig.~\ref{fig:reliability_schematic} illustrates these reliability regimes and the effect of receive AP expansion. The following theorem characterizes these effects.

\begin{thm}[Reliability Characterization under Distributed Sensing]
\label{thm:reliability_characterization}
Consider a finite embodied alphabet $\cC$ with $J\ge2$.

\emph{1) Nonidentifiable regime:}
If $\kappa_{\max}(\cC,\cS)>1$, then
\begin{equation}
\lim_{L\to\infty}
P_{\max}^{\star}(L;\cC,\cS)
=
1-
\frac{1}{\kappa_{\max}(\cC,\cS)}.
\label{eq:ambiguity_error_floor}
\end{equation}

\emph{2) Identifiable regime:}
If $\kappa_{\max}(\cC,\cS)=1$, then
\begin{equation}
\begin{aligned}
\lim_{L\to\infty}
-\frac{1}{L}
\log P_{\max}^{\star}(L;\cC,\cS)
&=
C_{\min}^{\star}(\cC,\cS),\\
\lim_{L\to\infty}
-\frac{1}{L}
\log P_{\max}^{\rm ML}(L;\cC,\cS)
&=
C_{\min}^{\star}(\cC,\cS).
\end{aligned}
\label{eq:maximum_error_exponents}
\end{equation}

\emph{3) Receive AP expansion:}
For $\emptyset\ne\cS_1\subset\cS_2$,
\begin{equation}
\begin{aligned}
\kappa_{\max}(\cC,\cS_2)
&\le
\kappa_{\max}(\cC,\cS_1),\\
C_{\min}^{\star}(\cC,\cS_2)
&\ge
C_{\min}^{\star}(\cC,\cS_1).
\end{aligned}
\label{eq:reliability_monotonicity}
\end{equation}
The resulting reliability improvement has the following three cases.

\emph{(a) Error floor reduction:}
Suppose $\cC$ is nonidentifiable under both $\cS_1$ and $\cS_2$. The exact asymptotic error floor strictly decreases if and only if $\kappa_{\max}(\cC,\cS_2)<\kappa_{\max}(\cC,\cS_1)$.

\emph{(b) Error floor elimination:}
If $\kappa_{\max}(\cC,\cS_1)>1$ and $\kappa_{\max}(\cC,\cS_2)=1$, then the asymptotic error floor is eliminated, and the error under $\cS_2$ decays exponentially with exponent $C_{\min}^{\star}(\cC,\cS_2)>0$. By Theorem~\ref{thm:additional_ap_distinguishability}, this occurs if and only if every $i<j$ satisfying $i\sim_{\cS_1}j$ also satisfies $\bm A_i\ne\bm A_j$ or $\bm V_i\ne\bm V_j$.

\emph{(c) Error exponent improvement:}
If $\cC$ is already identifiable under $\cS_1$, then $C_{\min}^{\star}(\cC,\cS_2)>C_{\min}^{\star}(\cC,\cS_1)$ if and only if the additional observations provide a new distinction for every pair attaining the minimum pairwise Chernoff information under $\cS_1$. Specifically, every pair $(i,j)$ with $C_{ij}^{\star}(\cS_1)=C_{\min}^{\star}(\cC,\cS_1)$ must satisfy $\bm A_i\ne\bm A_j$ or $\bm V_i\ne\bm V_j$.
\end{thm}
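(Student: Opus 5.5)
The plan is to treat the three parts separately, reducing each to a standard hypothesis-testing argument and, for part 3, to Theorem~\ref{thm:additional_ap_distinguishability}.

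\emph{Part 1 (nonidentifiable regime).} For the lower bound, fix a largest class $\mathcal E\in\Pi_{\cS}$ with $|\mathcal E|=\kappa_{\max}$. All $j\in\mathcal E$ induce the same law $p^{(L)}$ of $\bm Y^{(L)}$. For any (possibly randomized) decoder, the probabilities $\Pr\{\delta(\bm Y^{(L)})=j\mid W=j\}$, $j\in\mathcal E$, are therefore all computed under the same law. Since the events $\{\delta=j\}$ are disjoint, these probabilities sum to at most $1$. Hence some $j\in\mathcal E$ has correct-decision probability at most $1/\kappa_{\max}$, and $P_{\max}^\star(L)\ge 1-1/\kappa_{\max}$ for every $L$.

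For the achievability, consider the decoder that first runs ML among class representatives to select a class, and then outputs a uniformly random member of that class. Distinct classes induce distinct laws, so the representatives form an identifiable family. By the standard union bound with the Chernoff bound, $\Pr\{\text{wrong class}\mid W=j\}\le (J-1)\exp(-L\,c)$ with $c>0$ the minimum Chernoff information between distinct classes. The conditional error is therefore at most $1-(1-(J-1)e^{-Lc})/|\mathcal E_j|$, which tends to at most $1-1/\kappa_{\max}$. Together with the lower bound, this gives \eqref{eq:ambiguity_error_floor}.

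\emph{Part 2 (identifiable regime).} For the upper bound, ML makes an error given $W=j$ only if some $i\ne j$ satisfies $p_i^{(L)}\ge p_j^{(L)}$. Bounding that event with $s$ optimal for the pair gives $\Pr\{p_i^{(L)}\ge p_j^{(L)}\mid j\}\le e^{-LC^\star_{ij}}$, so $P_{\max}^{\rm ML}\le (J-1)e^{-LC^\star_{\min}}$. This yields $\liminf -\frac1L\log P^{\rm ML}_{\max}\ge C^\star_{\min}$, and the same holds for $P^\star_{\max}\le P^{\rm ML}_{\max}$.

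For the lower bound, take a bottleneck pair $(i,j)$. Any $J$-ary decoder induces a binary test between $i$ and $j$ (decide $i$ iff $\delta=i$) whose average error is at most $P_{\max}(\delta)$. Hence $P^\star_{\max}(L)\ge P^{\star}_{e,ij}(L)$, the minimum binary error with $L$ repetitions. The classical Chernoff theorem, cited in the paper as giving the exact binary exponent, provides $-\frac1L\log P^\star_{e,ij}(L)\to C^\star_{ij}=C^\star_{\min}$. The sandwich $P^\star_{\max}\le P^{\rm ML}_{\max}$ then gives both limits in \eqref{eq:maximum_error_exponents}.

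The main subtlety I expect is this converse step: I will rely on Chernoff's theorem for the binary problem, which applies because the Gaussian log-likelihood ratio has a finite moment generating function on $[0,1]$ and the laws are distinct. The rest is bookkeeping.

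\emph{Part 3 (receive AP expansion).} Part 3 of Theorem~\ref{thm:additional_ap_distinguishability} gives that $\Pi_{\cS_2}$ refines $\Pi_{\cS_1}$, so classes can only shrink and $\kappa_{\max}$ cannot increase. The inequality \eqref{eq:pairwise_chernoff_monotonicity} holds pairwise, so taking the minimum over pairs preserves it; this gives \eqref{eq:reliability_monotonicity}.

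Case (a) follows because $x\mapsto 1-1/x$ is strictly increasing. Case (b) combines Parts 1 and 2 with the identifiability criterion of Theorem~\ref{thm:additional_ap_distinguishability}.

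For case (c), first suppose some bottleneck pair has $\bm A_i=\bm A_j$ and $\bm V_i=\bm V_j$. By the equality clause of \eqref{eq:pairwise_chernoff_monotonicity}, that pair keeps $C^\star_{ij}(\cS_2)=C^\star_{\min}(\cS_1)$, so the minimum does not increase. Conversely, suppose every bottleneck pair gains strictly. Non-bottleneck pairs satisfy $C^\star_{ij}(\cS_2)\ge C^\star_{ij}(\cS_1)>C^\star_{\min}(\cS_1)$. Since there are finitely many pairs, the new minimum is then strictly larger.
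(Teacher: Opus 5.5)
Your proposal is correct and follows essentially the same route as the paper. The error floor uses the class-counting converse together with class-level ML followed by a uniform choice within the class. The exponent uses the union--Chernoff bound, the induced binary test and the binary Chernoff theorem, and part 3 uses the refinement and equality conditions of Theorem~\ref{thm:additional_ap_distinguishability}. The only difference is cosmetic: you get the ML converse from the ordering $P_{\max}^{\star}\le P_{\max}^{\rm ML}$, whereas the paper reapplies the binary reduction to the ML decoder, and the two are equivalent.
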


\begin{figure}[t]
\centering
\includegraphics[width=0.74\linewidth]{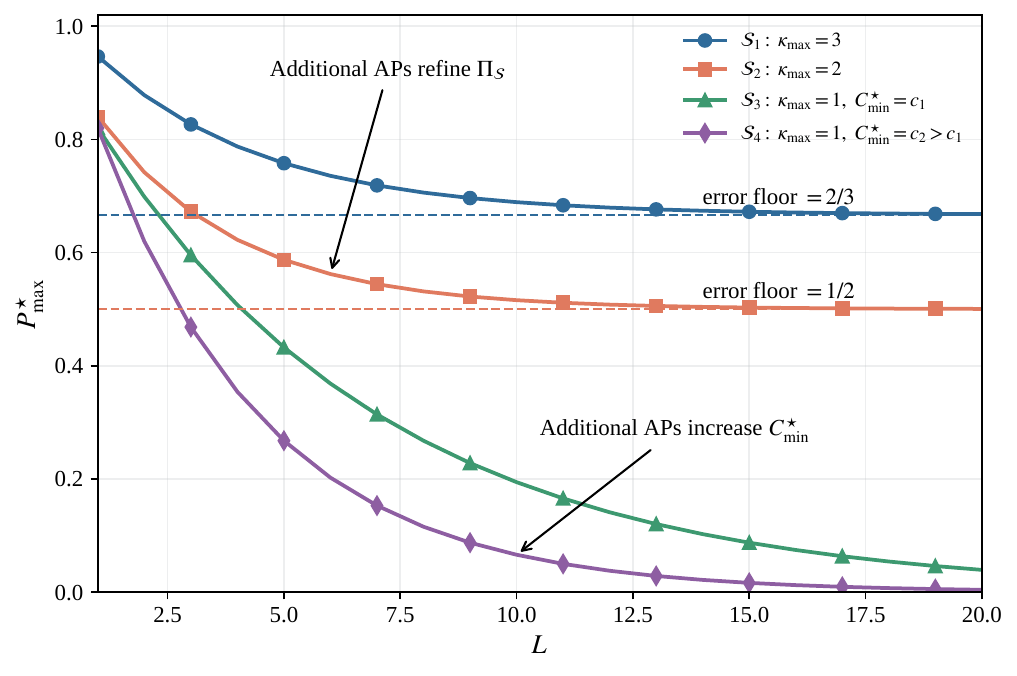}
\caption{Receive AP expansion. Reducing $\kappa_{\max}$ lowers the asymptotic error floor. Once $\kappa_{\max}=1$, increasing $C_{\min}^{\star}$ improves the error exponent.}
\label{fig:reliability_schematic}
\end{figure}

\emph{Proof:} See the supplementary material.

Theorem~\ref{thm:reliability_characterization} reveals fundamentally different roles of spatial and temporal sensing resources. Temporal repetition accumulates evidence for statistical distinctions already available under a fixed receive AP set, but cannot create a missing distinction. Receive AP cooperation changes the available observations and can therefore both resolve statistical ambiguities and strengthen existing distinctions. In particular, an embodied alphabet may be nonidentifiable under every individual receive AP but identifiable under their joint observation. In this regime, receive AP cooperation removes an error floor that no individual receive AP can overcome through temporal repetition.

\subsection{Reliability with a Finite Sensing Budget}
\label{subsec:finite_sensing_budget}

For a prescribed finite number of sensing intervals $L$ and target maximum error probability $\epsilon$, the exact maximum error probability is generally unavailable. We therefore establish a sufficient condition under ML decoding and a necessary condition that applies to any decoder, with $\epsilon\in(0,1/2)$.

\begin{prop}[Sufficient Reliability Condition]
\label{prop:maximum_error_bound}
The ML decoder satisfies
\begin{equation}
P_{\max}^{\rm ML}(L;\cC,\cS)
\le
(J-1)
\exp
\left[
-LC_{\min}^{\star}(\cC,\cS)
\right].
\label{eq:maximum_error_bound}
\end{equation}
Consequently,
\begin{equation}
C_{\min}^{\star}(\cC,\cS)
\ge
C_{\rm req}(J,\epsilon,L)
\triangleq
\frac{1}{L}
\log\frac{J-1}{\epsilon}
\label{eq:reliability_requirement}
\end{equation}
is sufficient for $P_{\max}^{\rm ML}(L;\cC,\cS)\le\epsilon$.
\end{prop}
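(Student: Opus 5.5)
The plan is to combine a union bound over the competing symbols with the pairwise Chernoff bound applied to the $L$-fold product densities. Fix the true message $W=j$. We may assume the alphabet is identifiable; otherwise $C_{\min}^{\star}=0$ and \eqref{eq:maximum_error_bound} holds trivially because its right-hand side is $J-1\ge1$.

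The first step is to handle ties. The ML decoder errs only if some $i\ne j$ satisfies $p_i^{(L)}(\bm Y^{(L)})\ge p_j^{(L)}(\bm Y^{(L)})$. Under identifiability, every pair of Gaussian laws has distinct covariances, so the set where $p_i^{(L)}=p_j^{(L)}$ is the zero set of a nonconstant quadratic form, hence Lebesgue-null; tie-breaking is therefore irrelevant. The union bound then gives
\[
\Pr\{\delta_{\rm ML}\ne j\mid W=j\}\le\sum_{i\ne j}\Pr_j\bigl\{p_i^{(L)}(\bm Y^{(L)})\ge p_j^{(L)}(\bm Y^{(L)})\bigr\}.
\]

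Next, bound each term by a Chernoff argument. For any $s\in[0,1]$, the indicator of $\{p_i^{(L)}\ge p_j^{(L)}\}$ is at most $(p_i^{(L)}/p_j^{(L)})^{s}$. Hence
\[
\Pr_j\{\cdot\}\le\int (p_i^{(L)})^{s}(p_j^{(L)})^{1-s}\,\dd\bm Y=\exp[-L\,C_{ij}(s;\cS)],
\]
where the equality uses the product factorization of $p_k^{(L)}$ noted at the start of Section~\ref{sec:reliability}. Choosing the maximizing $s$ gives the bound $\exp[-LC_{ij}^{\star}(\cS)]$. Note that the order of arguments is consistent with the definition: $C_{ij}$ is written with $p_i^s p_j^{1-s}$, and $C_{ij}^{\star}=C_{ji}^{\star}$ by the substitution $s\mapsto 1-s$, so no asymmetry arises. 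Each term is then at most $\exp[-LC_{\min}^{\star}]$, and summing over the $J-1$ competitors gives a bound independent of $j$. Taking the maximum over $j$ yields \eqref{eq:maximum_error_bound}.

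Finally, \eqref{eq:reliability_requirement} follows by rearranging: $(J-1)\exp[-LC_{\min}^{\star}]\le\epsilon$ is equivalent to $C_{\min}^{\star}\ge\frac1L\log\frac{J-1}{\epsilon}$. The only delicate point is the measure-zero tie argument, which the reduction to the identifiable case disposes of. Even without that reduction, the inequality $\mathbf 1\{p_i\ge p_j\}\le(p_i/p_j)^s$ already covers ties, so no real obstacle is expected.
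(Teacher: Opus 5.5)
Your proposal is correct and follows the same route as the paper. It applies the pairwise Chernoff bound to the $L$-fold product densities for each competitor, takes a union bound over the $J-1$ competitors, maximizes over the true symbol, and rearranges to obtain $C_{\rm req}$. Your extra remarks on ties and on the nonidentifiable case are sound but unnecessary: the error event $\{\delta_{\rm ML}\ne j\}$ is already contained in the union of the events $\{p_i^{(L)}\ge p_j^{(L)}\}$, whatever tie-breaking rule the ML decoder uses.
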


\emph{Proof:} See the supplementary material.

The sufficient condition identifies embodied alphabets that are guaranteed to meet the target reliability under ML decoding within the available sensing budget. A complementary necessary condition identifies a region that no decoder can support.

\begin{prop}[Necessary Reliability Condition]
\label{prop:pairwise_necessary_condition}
Suppose a decoder has maximum error probability no greater than $\epsilon<1/2$. Then
\begin{equation}
C_{\min}^{\star}(\cC,\cS)
\ge
C_{\rm nec}(\epsilon,L)
\triangleq
\frac{1}{2L}
\log
\frac{1}{4\epsilon(1-\epsilon)}.
\label{eq:necessary_reliability_requirement}
\end{equation}
\end{prop}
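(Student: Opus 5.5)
Fix a decoder $\delta$ with $P_{\max}(\delta;L,\cC,\cS)\le\epsilon<1/2$, and let $(i,j)$ be a pair attaining $C_{\min}^{\star}(\cC,\cS)$. The plan is to reduce the $J$-ary problem to the binary test between $p_i^{(L)}$ and $p_j^{(L)}$. From $\delta$, build the binary test that declares $i$ when $\delta(\bm Y^{(L)})=i$ and declares $j$ otherwise. Its conditional errors satisfy $\Pr\{\text{declare } j\mid i\}\le\Pr\{\delta\ne i\mid W=i\}\le\epsilon$ and $\Pr\{\text{declare } i\mid j\}\le\Pr\{\delta\ne j\mid W=j\}\le\epsilon$. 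Randomized decoders are handled identically through acceptance functions $\phi(\bm Y)\in[0,1]$. Hence some binary test between $p_i^{(L)}$ and $p_j^{(L)}$ has both type-I and type-II errors at most $\epsilon$.

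Next I lower-bound binary errors through the Bhattacharyya coefficient. Let $\rho=\int\sqrt{p_i^{(L)}p_j^{(L)}}$. For any test with acceptance region $A$ for $i$ (or a function $\phi$), split $\rho=\int_A+\int_{A^c}$ and apply Cauchy--Schwarz to each piece:
\[
\rho\le\sqrt{P_i(A)P_j(A)}+\sqrt{P_i(A^c)P_j(A^c)}.
\]
With $\alpha=P_i(A^c)\le\epsilon$ and $\beta=P_j(A)\le\epsilon$, this gives $\rho\le\sqrt{(1-\alpha)\beta}+\sqrt{\alpha(1-\beta)}$. The right side is increasing in $\alpha$ and $\beta$ on $[0,1/2]$, so $\rho\le2\sqrt{\epsilon(1-\epsilon)}$. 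For randomized $\phi$, write $\sqrt{p_ip_j}=\sqrt{\phi p_i\cdot\phi p_j}+\sqrt{(1-\phi)p_i(1-\phi)p_j}$ and apply Cauchy--Schwarz to each term in the same way.

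Finally I connect $\rho$ to Chernoff information. By the product structure, $-\log\rho=C_{ij}(1/2;\cS)\,L=L B_{ij}(\cS)$, so $LB_{ij}(\cS)\ge\tfrac12\log\frac{1}{4\epsilon(1-\epsilon)}$. Since $C_{ij}^{\star}(\cS)\ge C_{ij}(1/2;\cS)=B_{ij}(\cS)$, dividing by $L$ gives
\[
C_{\min}^{\star}(\cC,\cS)=C_{ij}^{\star}(\cS)\ge C_{\rm nec}(\epsilon,L).
\]
If the pair $(i,j)$ is nonidentifiable, then $\rho=1$, which contradicts $\rho\le2\sqrt{\epsilon(1-\epsilon)}<1$. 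So this case cannot occur and needs no separate treatment.

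The main obstacle is the monotonicity step for $\sqrt{(1-\alpha)\beta}+\sqrt{\alpha(1-\beta)}$. The partial derivative in $\alpha$ is $\tfrac12\bigl[\sqrt{(1-\beta)/\alpha}-\sqrt{\beta/(1-\alpha)}\bigr]$. This is nonnegative exactly when $(1-\alpha)(1-\beta)\ge\alpha\beta$, which holds whenever $\alpha+\beta\le1$, in particular on $[0,1/2]^2$. By symmetry the same holds in $\beta$. This is precisely where the assumption $\epsilon<1/2$ is used.
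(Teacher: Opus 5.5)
Your proof is correct and reaches the paper's constant. The reduction to the binary test between $p_i^{(L)}$ and $p_j^{(L)}$ and the closing steps, $-\log\rho=LB_{ij}(\cS)$ and $C_{ij}^{\star}(\cS)\ge B_{ij}(\cS)$, are the same as in the paper; the middle step is where you differ.

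\textbf{The paper's route.} It uses only that the induced test has equal-prior average error at most $\epsilon$. Through $P_{e,ij}^{\star}(L)=[1-\operatorname{TV}(p_i^{(L)},p_j^{(L)})]/2$ this gives $\operatorname{TV}\ge1-2\epsilon$. It then quotes the textbook inequality $\operatorname{TV}\le\sqrt{1-\rho^2}$ to obtain $\rho^2\le4\epsilon(1-\epsilon)$.

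\textbf{Your route.} You bypass total variation entirely. Your Cauchy--Schwarz split over the acceptance region is the data-processing inequality for the Bhattacharyya coefficient under the binary quantizer, giving $\rho\le\sqrt{(1-\alpha)\beta}+\sqrt{\alpha(1-\beta)}$. Your monotonicity computation, valid on $\alpha+\beta\le1$, then gives $\rho\le2\sqrt{\epsilon(1-\epsilon)}$.

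\textbf{What each buys.} The paper's argument is shorter and needs only an average-error bound, at the cost of relying on two external facts. Yours is self-contained and handles randomized decoders explicitly through $\phi$. It keeps the two error types separate, using the per-hypothesis bounds $\alpha,\beta\le\epsilon$ that the maximum-error assumption supplies. Your argument can also recover the average-error version: your two-point bound is jointly concave and symmetric in $(\alpha,\beta)$. So on $\{\alpha+\beta\le2\epsilon\}$ it is still at most $2\sqrt{\epsilon(1-\epsilon)}$.
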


\emph{Proof:} See the supplementary material.

The two bounds divide the finite sensing problem into three regions. If $C_{\min}^{\star}<C_{\rm nec}$, the target reliability is impossible for any decoder. If $C_{\min}^{\star}\ge C_{\rm req}$, it is guaranteed under ML decoding. The interval between the two thresholds is not resolved by these bounds.

The thresholds also reveal a finite sensing benefit of receive AP cooperation. If $C_{\min}^{\star}(\cC,\{r\})<C_{\rm nec}(\epsilon,L)$ for every $r\in\cS$, while $C_{\min}^{\star}(\cC,\cS)\ge C_{\rm req}(J,\epsilon,L)$, then no decoder using any individual receive AP can meet the target maximum error probability $\epsilon$, whereas the joint ML decoder over $\cS$ is guaranteed to meet it. Thus, for the same embodied alphabet and sensing budget, joint processing across the selected receive APs can move the system from a region where the target reliability is impossible to one where it is guaranteed under ML decoding.

\subsection{Communication Rate}
\label{subsec:communication_rate}

For a fixed design $(\cC,\cS)$ with $|\cC|=J$, Proposition~\ref{prop:maximum_error_bound} guarantees the target maximum error probability whenever $C_{\min}^{\star}(\cC,\cS)\ge C_{\rm req}(J,\epsilon,L)$. For any design satisfying this condition, one embodied symbol conveys $\log_2J$ bits over $L$ sensing intervals. The corresponding communication rate over the sensing duration is
\begin{equation}
R_{\epsilon,L}(\cC,\cS)
=
\frac{\log_2 J}{LT_{\rm sen}}.
\label{eq:embodied_rate}
\end{equation}

The normalization in \eqref{eq:embodied_rate} includes only the sensing duration and excludes actuation, settling, and waiting time. It therefore characterizes communication during the sensing operation rather than the complete end-to-end throughput.

Increasing $L$ relaxes the pairwise Chernoff information required for a prescribed reliability target, but also increases the sensing duration in \eqref{eq:embodied_rate}. Additional receive APs can instead improve the reliability condition without increasing this duration. The resulting design problem is to select the receive APs and embodied alphabet so as to support as many reliable embodied symbols as possible under the prescribed sensing budget, which is addressed in Section~\ref{sec:design}.

\section{Joint Receive AP and Embodied Alphabet Design}
\label{sec:design}

The reliability analysis in Section~\ref{sec:reliability} reveals a coupling between the receive AP set and the embodied alphabet. The embodied alphabet $\cC$ determines the statistical distinctions required for communication, while the receive AP set $\cS$ determines the observations available at the CPU. An AP that is useful for one alphabet can be redundant for another, and changing $\cS$ can alter both the unresolved equivalence classes and the symbol pairs that limit reliability. The two objects must therefore be designed jointly.

Related sensor, AP, and receiver selection problems have been studied in distributed detection and ISAC \cite{bajovic2011sensor,yan2024communicate,wang2026receiver}. Here, the hypotheses themselves are determined by the selected scatterer positions rather than being fixed independently of the receive AP selection. Consequently, a receive AP cannot be assigned a standalone sensing quality independently of $\cC$. The design must account for both whether the required statistical distinctions exist and how strong they are.

\subsection{Joint Design Criterion}
\label{subsec:joint_design}

Let $\cQ_{\rm G}\subseteq\cQ$ be a finite candidate position set with $N_{\rm G}=|\cQ_{\rm G}|$, and let $N_{\rm rx}\in\{1,\ldots,|\cR|\}$ be the receive AP budget.
We first solve the design for a prescribed alphabet size $J$, with $2\le J\le N_{\rm G}$, and later repeat the design over candidate values of $J$.

The reliability characterization in Section~\ref{sec:reliability} identifies two quantities with direct asymptotic meanings for the joint design:
\begin{itemize}[leftmargin=0.45cm]
\item \emph{Ambiguity resolution:} for a nonidentifiable alphabet, $\kappa_{\max}(\cC,\cS)$ determines the asymptotic error floor.
\item \emph{Distinction strengthening:} once the alphabet is identifiable, $C_{\min}^{\star}(\cC,\cS)$ determines the error exponent.
\end{itemize}
Direct maximization of $C_{\min}^{\star}$ from the outset cannot distinguish among nonidentifiable designs because every such design has $C_{\min}^{\star}=0$, even though their unresolved ambiguity structures can be very different.

To describe this structure, arrange the equivalence class sizes in descending order and pad the vector with zeros to length $J$:
\begin{equation*}
\bm\kappa(\cC,\cS)
=
\operatorname{sort}_{\downarrow}
\bigl(
(|\mathcal E|)_{\mathcal E\in\Pi_{\cS}},
0,\ldots,0
\bigr).
\end{equation*}
Its first entry is $\kappa_{\max}(\cC,\cS)$ and therefore determines the exact asymptotic error floor whenever the alphabet is nonidentifiable.
If two such designs have the same largest equivalence class, the asymptotic maximum error alone does not distinguish them.
We use the remaining class sizes to break this tie, preferring smaller unresolved multiplicities.

To compare the remaining statistical distinctions after the ambiguity structure has been fixed, arrange all pairwise Chernoff information values in ascending order:
\begin{equation*}
\bm c(\cC,\cS)
=
\operatorname{sort}_{\uparrow}
\bigl(
(C_{ij}^{\star}(\cS))_{1\le i<j\le J}
\bigr).
\end{equation*}
Its first entry is $C_{\min}^{\star}(\cC,\cS)$.
When the alphabet is identifiable, all equivalence classes are singletons and hence $\bm\kappa=(1,\ldots,1)$.
The comparison then passes to $\bm c$, whose first entry determines the asymptotic error exponent.
If the weakest pairwise distinction is identical for two designs, the subsequent entries break the tie by successively comparing their next weakest pairs.

We compare feasible designs lexicographically: a smaller $\bm\kappa$ is preferred, and if $\bm\kappa$ is identical, a larger $\bm c$ is preferred.
The resulting joint design problem is
\begin{equation}
\begin{aligned}
\underset{\cC,\cS}{\operatorname{lex\,minimize}}\quad
&\bigl(
\bm\kappa(\cC,\cS),
-\bm c(\cC,\cS)
\bigr)\\
\mathrm{s.t.}\quad
&\cC\subseteq\cQ_{\rm G},
\qquad
|\cC|=J,\\
&\cS\subseteq\cR,
\qquad
|\cS|=N_{\rm rx}.
\end{aligned}
\label{eq:joint_ap_alphabet_design}
\end{equation}

The equality constraint $|\cS|=N_{\rm rx}$ does not restrict the optimum relative to a budget constraint $|\cS|\le N_{\rm rx}$.
By Theorem~\ref{thm:additional_ap_distinguishability}, adding receive AP observations can only refine $\Pi_{\cS}$ and cannot decrease any pairwise Chernoff information.
Hence, any design using fewer than $N_{\rm rx}$ receive APs can be enlarged without worsening the lexicographic criterion.

\subsection{Receive AP Selection Structure}
\label{subsec:receive_ap_selection}

The receive AP expansion in Theorem~\ref{thm:reliability_characterization} shows that the value of an additional receive AP depends on the distinctions already available at the CPU. In the nonidentifiable regime, an AP can be useful by splitting an unresolved equivalence class. In the identifiable regime, $C_{\min}^{\star}$ increases only if every pair attaining the current minimum gains additional pairwise distinguishability. A receive AP should therefore not be ranked by a standalone measure of sensing strength.

To expose the combinatorial structure behind receive AP selection, we temporarily fix the embodied alphabet $\cC$ and focus on the minimum pairwise Chernoff information. For nonempty $\cS$, define $f_{\cC}(\cS)=C_{\min}^{\star}(\cC,\cS)$ and set $f_{\cC}(\emptyset)=0$. This scalar objective does not replace the lexicographic criterion in \eqref{eq:joint_ap_alphabet_design}; it is used here to determine whether receive AP gains satisfy the diminishing returns property underlying standard greedy guarantees. The following proposition shows that they generally do not.

\begin{prop}[Monotonicity and Failure of Submodularity]
\label{prop:ap_objective_structure}
For a fixed $\cC$, $f_{\cC}(\cS)$ is monotone in $\cS$, but it is not generally submodular.
\end{prop}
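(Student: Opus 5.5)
Monotonicity follows from Theorem~\ref{thm:reliability_characterization}, part 3. For nonempty $\cS_1\subset\cS_2$, \eqref{eq:reliability_monotonicity} gives $C_{\min}^{\star}(\cC,\cS_2)\ge C_{\min}^{\star}(\cC,\cS_1)$. This inequality is itself a consequence of the pairwise monotonicity \eqref{eq:pairwise_chernoff_monotonicity} in Theorem~\ref{thm:additional_ap_distinguishability}: $C_{ij}^\star(\cS_2)\ge C_{ij}^\star(\cS_1)$ holds for every pair, and the minimum of pointwise larger quantities is larger. For the empty set, the convention $f_{\cC}(\emptyset)=0$ suffices, because Chernoff information is nonnegative: $C_{ij}(0)=0$, so the maximum over $s$ is at least zero. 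Hence $f_{\cC}(\emptyset)\le f_{\cC}(\cS)$ for every $\cS$.

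To show that submodularity fails, a single counterexample violating $f(\cS\cup\{r\})-f(\cS)\le f(\cS'\cup\{r\})-f(\cS')$ for some $\cS'\subseteq\cS$, $r\notin\cS$ is enough. I would build it with the complementary structure of Example~\ref{ex:complementary_viewpoints}, under independent receive APs so that Corollary~\ref{cor:independent_identifiability} and factorization apply. Take $J=3$ and two candidate APs $r_1,r_2$ with block-independent observations. Choose the covariances so that $\bm\Sigma_2(\{r_1\})=\bm\Sigma_3(\{r_1\})$ while $\bm\Sigma_1(\{r_1\})$ differs from both, and $\bm\Sigma_1(\{r_2\})=\bm\Sigma_2(\{r_2\})$ while $\bm\Sigma_3(\{r_2\})$ differs. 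A concrete scalar instance ($D=1$ for illustration) is $\Sigma_k(\{r_1\})\in\{2,1,1\}$ and $\Sigma_k(\{r_2\})\in\{1,1,2\}$, with a block-diagonal joint covariance. Then $f(\{r_1\})=C^\star_{23}(\{r_1\})=0$ and $f(\{r_2\})=C^\star_{12}(\{r_2\})=0$.

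Under $\{r_1,r_2\}$, the Chernoff function adds across independent blocks. Every pair is distinguished by at least one AP: pair $(1,2)$ by $r_1$, pair $(2,3)$ by $r_2$, and pair $(1,3)$ by both. So each $C^\star_{ij}(\{r_1,r_2\})>0$ by Lemma~\ref{lem:gaussian_chernoff}, because the joint covariances differ, and hence $f(\{r_1,r_2\})>0$. Taking $\cS'=\emptyset$, $\cS=\{r_1\}$ and $r=r_2$, the marginal gain of $r_2$ is $f(\{r_2\})-f(\emptyset)=0$ on the smaller set but $f(\{r_1,r_2\})-f(\{r_1\})>0$ on the larger set. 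This reverses diminishing returns.

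The main obstacle is ensuring the example is realizable within the model \eqref{eq:block_cov}, not just abstract covariances. I would handle this by noting that the statement only asks that $f$ is ``not generally'' submodular, so any admissible family of positive definite covariances suffices. The given values arise, for instance, with a scalar noise $\bm K=\bm I$ and echo powers $\beta$ chosen per AP and symbol, possibly zero. If a more physical instance is desired, the same conclusion holds for any configuration reproducing the partitions in Example~\ref{ex:complementary_viewpoints}, since only the zero/positive pattern of the pairwise Chernoff information is used.
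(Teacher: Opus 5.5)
Your proposal is correct and follows essentially the same route as the paper: monotonicity from the pairwise monotonicity of Theorem~\ref{thm:additional_ap_distinguishability} plus $f_{\cC}(\emptyset)=0$ and nonnegativity, and failure of submodularity via a three-symbol, two-independent-AP scalar counterexample with the complementary indistinguishability pattern. The only difference is cosmetic: the paper reuses the same variance pair at both APs so it can compute the joint values exactly ($C_0$, $C_0$, $2C_0$), whereas you need, and correctly establish, only strict positivity of each joint pairwise Chernoff information.
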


\emph{Proof:} See the supplementary material.

Proposition~\ref{prop:ap_objective_structure} formalizes receive AP complementarity. In the counterexample, either AP alone gives zero gain in the minimum pairwise Chernoff information, whereas their joint observations identify the alphabet and yield a positive minimum. More generally, the marginal value of a receive AP can increase after other APs have been selected because the ambiguity and bottleneck structure changes with $\cS$.

This complementarity rules out the diminishing returns structure required by the standard approximation guarantee for greedy maximization of a monotone submodular function \cite{nemhauser1978analysis}. A one-way AP insertion procedure can therefore commit to an unfavorable early choice when later APs change the active ambiguities or bottleneck pairs. Since these changes can also alter the preferred embodied alphabet, the joint search developed below allows both AP exchanges and alphabet refinement.

\subsection{Efficient Chernoff Evaluation}
\label{subsec:efficient_chernoff_evaluation}

The joint design repeatedly evaluates pairwise Chernoff information for many candidate position pairs and receive AP sets. Lemma~\ref{lem:gaussian_chernoff} reduces the optimization over the Chernoff parameter to a scalar root search, but each evaluation of $C_{ij}(s)$ still involves determinants in the full CPU observation dimension $D_{\cS}$. From \eqref{eq:block_cov}, the embodied symbol affects only the echo covariance, whereas $\bm K(\cS)$ is common to all hypotheses. Since $\bm K(\cS)\succ\bm0$, whitening with $\bm K^{-1/2}(\cS)$ maps this common residual and noise covariance to the identity. The statistical differences among embodied symbols then remain only in the whitened echo covariance.

For a fixed $\cS$, suppress its dependence and write $\bm K=\bm K(\cS)$, $\bm G_j=\bm G_{\cS,j}$, $\bm R_{\alpha,j}=\bm R_{\alpha,\cS,j}$, and $\bm\Sigma_j=\bm\Sigma_j(\cS)$. Let $\bm U_j$ be any full column rank factor satisfying
\[
\bm K^{-1/2}
\bm G_j
\bm R_{\alpha,j}
\bm G_j^H
\bm K^{-1/2}
=
\bm U_j\bm U_j^H.
\]

\begin{prop}[Reduced Chernoff Representation]
\label{prop:signal_subspace_reduction}
For $s\in[0,1]$, define $\bm U_{ij}(s)=[\sqrt{1-s}\bm U_i,\sqrt{s}\bm U_j]$. Then
\begin{equation}
\begin{aligned}
C_{ij}(s)
&=
\log\det
\bigl(
\bm I+
\bm U_{ij}^H(s)
\bm U_{ij}(s)
\bigr)\\
&\quad
-(1-s)
\log\det
\bigl(
\bm I+\bm U_i^H\bm U_i
\bigr)\\
&\quad
-s
\log\det
\bigl(
\bm I+\bm U_j^H\bm U_j
\bigr).
\end{aligned}
\label{eq:reduced_chernoff}
\end{equation}
Hence, the determinants in \eqref{eq:reduced_chernoff} have order at most $\rank(\bm U_i)+\rank(\bm U_j)$, rather than $D_{\cS}$.
\end{prop}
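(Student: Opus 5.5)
Starting from the determinant form of Lemma~\ref{lem:gaussian_chernoff}, I will remove the common residual covariance by whitening and then collapse each determinant to the signal subspace using Sylvester's identity $\det(\bm I_m+\bm B\bm B^H)=\det(\bm I_n+\bm B^H\bm B)$. Because $\bm K\succ\bm0$, we can write $\bm\Sigma_k=\bm K^{1/2}(\bm I+\bm U_k\bm U_k^H)\bm K^{1/2}$ for $k\in\{i,j\}$; this follows immediately from \eqref{eq:block_cov} and the definition of $\bm U_k$. Substituting into $\log\det((1-s)\bm\Sigma_i+s\bm\Sigma_j)-(1-s)\log\det\bm\Sigma_i-s\log\det\bm\Sigma_j$, each of the three terms carries an additive $\log\det\bm K$. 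The coefficients satisfy $1-(1-s)-s=0$, so these contributions cancel. What remains is the same expression with $\bm\Sigma_k$ replaced by the whitened covariances $\bm I+\bm U_k\bm U_k^H$.

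Next I rewrite the mixture term. Since $(1-s)+s=1$,
\[
(1-s)(\bm I+\bm U_i\bm U_i^H)+s(\bm I+\bm U_j\bm U_j^H)=\bm I+(1-s)\bm U_i\bm U_i^H+s\bm U_j\bm U_j^H=\bm I+\bm U_{ij}(s)\bm U_{ij}^H(s),
\]
where the last equality is the block-product identity for $\bm U_{ij}(s)=[\sqrt{1-s}\bm U_i,\sqrt{s}\bm U_j]$. Applying Sylvester's identity to each of the three determinants then gives \eqref{eq:reduced_chernoff}. The resulting matrices have orders equal to the column counts of $\bm U_{ij}(s)$, $\bm U_i$ and $\bm U_j$.

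For the dimension claim, I take $\bm U_k$ to have full column rank, so its number of columns equals $\rank(\bm U_k)$. The matrix $\bm U_{ij}^H(s)\bm U_{ij}(s)$ then has order $\rank(\bm U_i)+\rank(\bm U_j)$, which bounds the other two orders as well. The endpoints $s\in\{0,1\}$ need no special treatment: one block of $\bm U_{ij}(s)$ becomes zero, and the identity still holds because Sylvester's identity does not require full rank. I would also note that the formula does not depend on which factor is chosen, since a right-unitary change $\bm U_k\to\bm U_k\bm T$ leaves each determinant unchanged.

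There is no serious obstacle here. The only point needing care is confirming that the $\log\det\bm K$ terms cancel exactly for every $s$, including the endpoints, so that the result is a true identity and not just an identity up to a constant.
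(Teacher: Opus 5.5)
Your proposal is correct and follows essentially the same route as the paper's proof. Both whiten by $\bm K^{-1/2}$, cancel the $\log\det\bm K$ terms, write the mixture as $\bm I+\bm U_{ij}(s)\bm U_{ij}^H(s)$, and apply Sylvester's determinant identity to all three determinants. Your added remarks on the endpoints $s\in\{0,1\}$ and on the invariance under the choice of factor are accurate extras.
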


\emph{Proof:} See the supplementary material.

The reduction also reveals which observation dimensions can contribute to pairwise distinguishability. In the whitened coordinates, the two hypotheses can differ only on $\operatorname{span}([\bm U_i,\bm U_j])$. On its orthogonal complement, both covariance matrices are the identity, so these directions contribute no Chernoff information. Equivalently, the associated generalized eigenvalues in Lemma~\ref{lem:gaussian_chernoff} are equal to one. Consequently, no more than $\rank([\bm U_i,\bm U_j])\le\rank(\bm U_i)+\rank(\bm U_j)$ generalized eigenvalues can differ from one.

In the present model, $\rank(\bm U_j)\le\rank(\bm G_j)\le|\cP||\cS|$. The determinant dimension is therefore governed by the echo subspace rank rather than the full observation dimension $D_{\cS}$. This reduction is particularly useful in the joint search below, where Chernoff information must be evaluated for many candidate position pairs and receive AP sets.

\subsection{Joint Search Algorithm}
\label{subsec:joint_search}

Receive AP complementarity motivates revisiting earlier AP choices through exchange, while changes in $\cS$ can change the alphabet preferred by the lexicographic criterion. We therefore refine the alphabet whenever $\cS$ changes and retain multiple starts to reduce sensitivity to early AP choices. Proposition~\ref{prop:signal_subspace_reduction} reduces the cost of the repeated Chernoff evaluations.

For a fixed receive AP set $\cS$, physical separation alone does not reliably reflect statistical distinguishability at the CPU. We initialize the alphabet with the candidate position pair having the largest Chernoff information,
\[
(\bm q_1,\bm q_2)
\in
\argmax_{\substack{
\bm q,\widetilde{\bm q}\in\cQ_{\rm G}\\
\bm q\ne\widetilde{\bm q}
}}
C^{\star}(\bm q,\widetilde{\bm q};\cS),
\]
where $C^{\star}(\bm q,\widetilde{\bm q};\cS)$ is the Chernoff information induced by the two candidate positions under $\cS$. Starting from this pair, positions are inserted until the alphabet contains $J$ symbols. At each step, every unused position is tentatively added and the resulting designs are compared by \eqref{eq:joint_ap_alphabet_design}. Because later insertions can change the ambiguity structure and bottleneck pairs, the completed alphabet is refined by single position replacements until no strict improvement remains; ties may be broken arbitrarily.

The receive AP search starts from singleton sets. For each $r\in\cR$, an alphabet is constructed and refined under $\cS=\{r\}$, and up to $N_{\rm init}$ preferred pairs are retained. From each start, APs are inserted until $N_{\rm rx}$ is reached. For every candidate addition, the alphabet is refined under the enlarged receive AP set before the joint designs are compared. The insertion continues even when the immediate increase in $C_{\min}^{\star}$ is zero, since an AP can improve the equivalence structure before identifiability and complementary APs can yield a later gain.

After the budget is filled, exchanges between selected and unselected APs are tested, again refining the alphabet before each comparison. Strictly preferred exchanges are accepted until no single exchange improves the design. Algorithm~\ref{alg:joint_design} summarizes the search.

\begin{algorithm}[t]
\caption{Joint receive AP and embodied alphabet search}
\label{alg:joint_design}
\begin{algorithmic}[1]
\Require $\cQ_{\rm G}$, $\cR$, $J$, $N_{\rm rx}$, and $N_{\rm init}$
\Ensure $(\widehat{\cC}_J,\widehat{\cS}_J)$
\For{$r\in\cR$}
    \State Set $\cS_r\gets\{r\}$; construct and refine $\cC_r$ with $J$ symbols under $\cS_r$
\EndFor
\State Retain the $\min\{N_{\rm init},|\cR|\}$ most preferred pairs $(\cC_r,\cS_r)$
\For{each retained pair $(\cC,\cS)$}
    \While{$|\cS|<N_{\rm rx}$}
        \For{$r^{+}\in\cR\setminus\cS$}
            \State Refine a copy of $\cC$ under $\cS\cup\{r^{+}\}$ to obtain a candidate
        \EndFor
        \State Update $(\cC,\cS)$ to the most preferred candidate
    \EndWhile
    \If{$|\cS|<|\cR|$}
        \State $\mathrm{improved}\gets\mathrm{true}$
        \While{$\mathrm{improved}$}
            \For{each $(r^{-},r^{+})\in\cS\times(\cR\setminus\cS)$}
                \State Refine a copy of $\cC$ under $(\cS\setminus\{r^{-}\})\cup\{r^{+}\}$ to obtain a candidate
            \EndFor
            \State Let $(\cC',\cS')$ be the most preferred candidate
            \If{$(\cC',\cS')$ is preferred to $(\cC,\cS)$}
                \State $(\cC,\cS)\gets(\cC',\cS')$
            \Else
                \State $\mathrm{improved}\gets\mathrm{false}$
            \EndIf
        \EndWhile
    \EndIf
    \State Store the final pair from this start
\EndFor
\State Return the most preferred stored pair as $(\widehat{\cC}_J,\widehat{\cS}_J)$
\end{algorithmic}
\end{algorithm}

Since $\cQ_{\rm G}$ and $\cR$ are finite, the insertion stages contain finitely many updates. The position refinement and AP exchange stages also terminate because every accepted update strictly improves the lexicographic criterion over a finite set of feasible designs.

\textit{Computational complexity.}
The dominant operation is pairwise Chernoff evaluation. For each receive AP set, the $N_{\rm G}(N_{\rm G}-1)/2$ candidate position pairs can be evaluated once and cached; Lemma~\ref{lem:gaussian_chernoff} reduces each optimization to a scalar root search, and Proposition~\ref{prop:signal_subspace_reduction} reduces the determinant dimension. After the initial pair is selected, alphabet construction examines $\sum_{k=2}^{J-1}(N_{\rm G}-k)$ candidate insertions, while one position refinement sweep examines $J(N_{\rm G}-J)$ replacements. With $n<N_{\rm rx}$ selected APs, one insertion step considers $|\cR|-n$ enlarged sets, and one exchange sweep at the budget considers $N_{\rm rx}(|\cR|-N_{\rm rx})$ exchanges. Each candidate AP change invokes alphabet refinement, and for fixed $N_{\rm G}$, $|\cR|$, $J$, and $N_{\rm rx}$, the effort grows linearly with $N_{\rm init}$.

\subsection{Alphabet Size Selection}
\label{subsec:alphabet_size_selection}

Algorithm~\ref{alg:joint_design} optimizes the receive AP set and embodied alphabet for a prescribed alphabet size $J$. To determine the alphabet size, we repeat the joint search for $J=2,\ldots,N_{\rm G}$ and test each returned design against the reliability condition in Proposition~\ref{prop:maximum_error_bound}. Let $\widehat F_J=C_{\min}^{\star}(\widehat{\cC}_J,\widehat{\cS}_J)$ denote the minimum pairwise Chernoff information of the design returned for size $J$. For prescribed $L$ and target maximum error probability $\epsilon$, this design is guaranteed to satisfy the reliability requirement whenever $\widehat F_J\ge C_{\rm req}(J,\epsilon,L)$.

Since $L$ and $T_{\rm sen}$ are fixed, the communication rate $\log_2 J/(LT_{\rm sen})$ increases monotonically with $J$. We therefore select the largest alphabet size among the returned designs satisfying the reliability condition:
\begin{equation*}
\widehat J_{\epsilon,L}
=
\max
\left(
\{1\}
\cup
\left\{
J:
\substack{
2\le J\le N_{\rm G},\\
\widehat F_J\ge C_{\rm req}(J,\epsilon,L)
}
\right\}
\right).
\end{equation*}
The value $\widehat J_{\epsilon,L}=1$ applies when none of the tested nontrivial alphabets satisfies the reliability requirement. The resulting communication rate over the $L$ sensing intervals is $\widehat R_{\epsilon,L}=\log_2\widehat J_{\epsilon,L}/(LT_{\rm sen})$.

\section{Numerical Results}
\label{sec:numerical_results}

Unless stated otherwise, the controllable region is $9\,\mathrm{m}\times9\,\mathrm{m}$ at height $h_{\rm A}=1.2$ m. The four transmit APs are located at $(-38,-30,10)$, $(36,-28,9)$, $(32,38,11)$, and $(-40,32,8.5)$ m, and every AP uses a $2\times2$ UPA whose broadside points toward the region center. Each sensing interval contains $\tau_s=6$ probing symbols. The fixed probing block uses unit modulus symbols and steers each transmit AP toward the region center.

We use the common local scattering covariance $\bm R_{\alpha}=(1-\rho_{\alpha})\bm I_{|\cP|}+\rho_{\alpha}\bm1_{|\cP|}\bm1_{|\cP|}^T$ with $\rho_{\alpha}=0.2$, so $\bm R_{\alpha,r,j}=\bm R_{\alpha}$ for all $r$ and $j$. Scattering coefficients are independent across receive APs except in the experiment with correlated observations below, while residual terms remain independent across receive APs. The local residual and noise covariance is $\bm K_r=(1+\xi)\bm I_D+\xi D\bm v_r\bm v_r^H$, where $\|\bm v_r\|=1$ and $\xi=0.08$. The vector $\bm v_r$ is the normalized response across the array and probing symbols of a fixed residual scatterer at $(10,-8,h_{\rm A})$ m. Following the bistatic range law in \cite{behdad2024multistatic}, we use $\beta_{r,t,j}\propto\bigl(\|\bm q_j-\bm b_t\|\|\bm q_j-\bm b_r\|/(\|[0,0,h_{\rm A}]^T-\bm b_t\|\|[0,0,h_{\rm A}]^T-\bm b_r\|)\bigr)^{-2}$. The common echo scale is calibrated so that $\mathrm{SNR}_{\rm ref}\triangleq\tr(\bm G_{r,j}\bm R_{\alpha}\bm G_{r,j}^H)/\tr(\bm K_r)$ at $\bm q_j=[0,0,h_{\rm A}]^T$ with the normalized bistatic factors set to one.

\subsection{Pairwise Distinguishability}
\label{subsec:num_pairwise_distinguishability}

We first evaluate the gain from optimizing the Chernoff parameter and then examine the incremental value of an additional receive AP.

\begin{figure}[!t]
\centering
\includegraphics[width=0.99\linewidth]{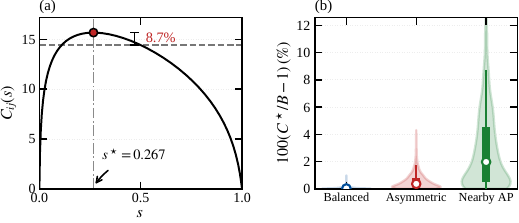}
\caption{Chernoff distinguishability of two embodied symbols. (a) Chernoff function for a representative pair with a nearby receive AP. (b) Relative gain over the Bhattacharyya distance under increasingly asymmetric sensing layouts.}
\label{fig:chernoff_bd}
\end{figure}

\begin{figure}[!t]
\centering
\includegraphics[width=0.6\linewidth]{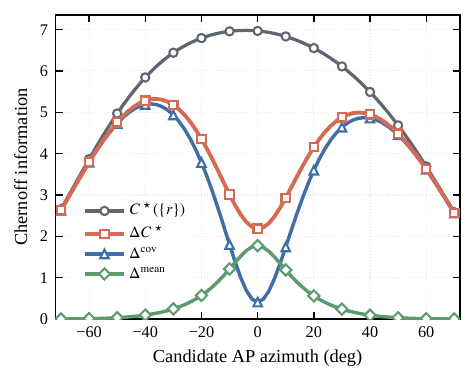}
\caption{Additional receive AP contribution versus candidate AP azimuth: standalone Chernoff information, optimized increment, and conditional covariance and mean terms evaluated at the maximizing Chernoff parameter of the enlarged set.}
\label{fig:additional_ap_contribution}
\end{figure}

Fig.~\ref{fig:chernoff_bd} evaluates the effect of optimizing the Chernoff parameter with two receive APs and $\mathrm{SNR}_{\rm ref}=12$ dB. Let $d_r$ and $h_r$ denote the horizontal distance and height of receive AP $r$, respectively. We consider three layouts with progressively stronger geometric imbalance:
\begin{itemize}[leftmargin=0.5cm]
    \item \emph{Balanced:} $d_1,d_2\in[25,40]$ m.
    \item \emph{Asymmetric:} $d_1\in[8,13]$ m and $d_2\in[35,60]$ m.
    \item \emph{Nearby AP:} $d_1\in[4.2,5]$ m, $h_1\in[2,4]$ m, and $d_2\in[35,60]$ m.
\end{itemize}
For the representative symbol pair in Fig.~\ref{fig:chernoff_bd}(a), $s^{\star}=0.267$ and $C_{ij}^{\star}$ is $8.7\%$ larger than $B_{ij}$. Across random scatterer position pairs, the median relative gains are $0.03\%$, $0.37\%$, and $1.98\%$ for the three layouts, respectively, and the $95$th percentile reaches $8.63\%$ in the nearby AP case.

In balanced layouts, the Chernoff function remains nearly symmetric about $s=1/2$, so $B_{ij}$ closely approximates $C_{ij}^{\star}$. Geometric imbalance can shift the maximizer from the midpoint and create an appreciable gap for some symbol pairs.

We next examine whether the receive AP that is strongest in isolation is also the most useful addition to an existing receive AP set. Let $r_0$ denote the fixed receive AP at $(28,0,8)$ m and $\cS_0=\{r_0\}$, while candidate receive AP $r$ moves along a horizontal arc of radius $30$ m at height $8$ m. The two scatterer positions are $(0,-3,h_{\rm A})$ and $(0,3,h_{\rm A})$ m, and $\mathrm{SNR}_{\rm ref}=16$ dB. The bistatic gain uses the normalization $\beta_{r,t,j}\propto\bigl(\|\bm q_j-\bm b_t\|\|\bm q_j-\bm b_r\|/(45\,\mathrm{m}\times35\,\mathrm{m})\bigr)^{-2}$. If $\phi$ denotes the azimuth separation between the two receive APs, cross-AP scattering correlation is modeled by $\rho_{\rm rx}(\phi)=0.99\exp[-(|\phi|/45^\circ)^2]$ and $\bm R_{\alpha,\cS_0\cup\{r\},j}=\bm R_{\rm rx}(\phi)\otimes\bm R_{\alpha}$, where $\bm R_{\rm rx}(\phi)=[1-\rho_{\rm rx}(\phi)]\bm I_2+\rho_{\rm rx}(\phi)\bm1_2\bm1_2^T$.

\begin{figure}[!t]
\centering
\includegraphics[width=0.76\linewidth]{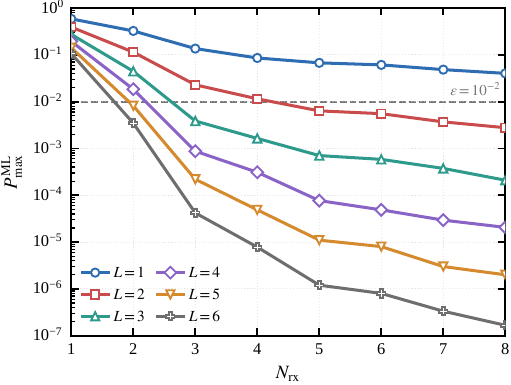}
\caption{Estimated maximum ML decoding error versus $N_{\rm rx}$ for different numbers $L$ of sensing intervals, with a fixed embodied alphabet.}
\label{fig:reliability_transition}
\end{figure}

Define $\Delta C_{ij}^{\star}(r)=C_{ij}^{\star}(\cS_0\cup\{r\})-C_{ij}^{\star}(\cS_0)$. Fig.~\ref{fig:additional_ap_contribution} compares the standalone value $C_{ij}^{\star}(\{r\})$ with $\Delta C_{ij}^{\star}(r)$. The two rankings differ. At $-4^\circ$, the candidate AP has the largest standalone value, $6.979$, with an increment of $2.343$. At $-36^\circ$, its standalone value decreases to $6.112$, and the increment reaches its maximum of $5.323$. Standalone pairwise distinguishability therefore does not determine the value of an AP once the observations over $\cS_0$ are available, consistent with Theorem~\ref{thm:additional_ap_distinguishability}.

At $s_r^{\star}\in\argmax_s C_{ij}(s;\cS_0\cup\{r\})$, the increment at the fixed Chernoff parameter is dominated by the mean term at $-4^\circ$, with $\Delta_{ij}^{\rm mean}=1.656$ and $\Delta_{ij}^{\rm cov}=0.687$, whereas the covariance term dominates at $-36^\circ$, with $5.184$ versus $0.143$. As characterized after Theorem~\ref{thm:additional_ap_distinguishability}, these terms decompose the Chernoff function increment at $s_r^{\star}$, while the optimized increment also accounts for optimization over $s$. Thus, an additional receive AP can provide new distinguishability through either the conditional prediction relation or the innovation covariance.

\subsection{Reliability with Networked Sensing}
\label{subsec:num_ap_selection}

We next fix $\cC$ and examine how receive AP cooperation and repeated sensing affect alphabet-level decoding reliability.

For a fixed alphabet with $J=6$, eight candidate receive APs, $\mathrm{SNR}_{\rm ref}=4$ dB, and $\epsilon=0.01$, we use the planar scatterer coordinates $(-4.5,-4.5)$, $(4.5,4.5)$, $(-4.5,4.5)$, $(4.5,-4.5)$, $(0,2.25)$, and $(-4.5,0)$ m at height $h_{\rm A}$. Exhaustive subset search selects $\cS$ to maximize $C_{\min}^{\star}(\cC,\cS)$ for each $N_{\rm rx}$. Fig.~\ref{fig:reliability_transition} reports the estimated maximum ML decoding error for $L=1,\ldots,6$. Over the tested receive AP budgets, the target is not reached for $L=1$. It is first reached at $N_{\rm rx}=5$ for $L=2$, at $N_{\rm rx}=3$ for $L=3$ and $4$, and at $N_{\rm rx}=2$ for $L=5$ and $6$.

\begin{figure}[!t]
\centering
\includegraphics[width=0.62\linewidth]{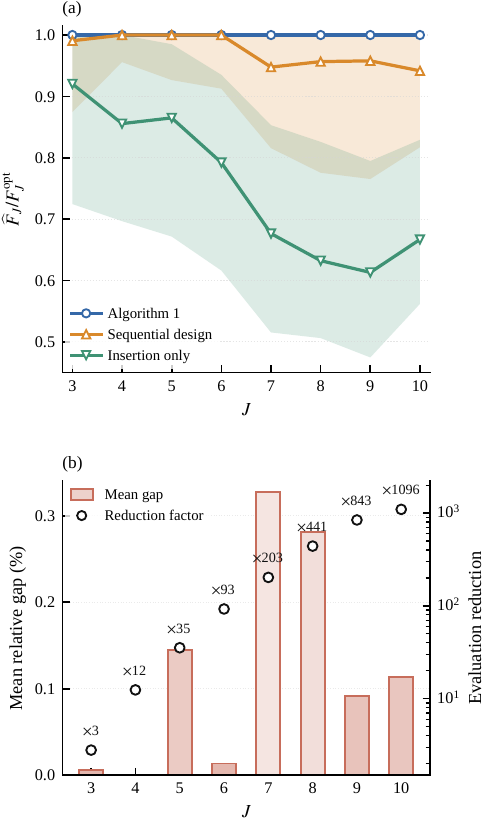}
\caption{Algorithm~\ref{alg:joint_design} and two reduced procedures against the exact discrete benchmark on reduced instances. (a) Ratio $\widehat F_J/F_J^{\rm opt}$; markers show medians and shaded bands span the $10$th--$90$th percentiles over $100$ layouts. (b) Mean relative gap and reduction in candidate evaluations relative to exhaustive enumeration.}
\label{fig:joint_design_validation}
\end{figure}

\begin{figure*}[!t]
\centering
\includegraphics[width=0.78\textwidth]{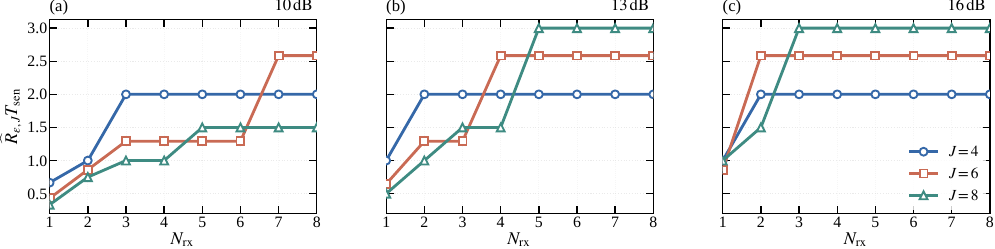}
\caption{Median guaranteed normalized embodied communication rate versus $N_{\rm rx}$ for jointly designed alphabets with $J\in\{4,6,8\}$ at $\mathrm{SNR}_{\rm ref}\in\{10,13,16\}$ dB over $100$ random receive AP layouts.}
\label{fig:embodied_rate_scaling}
\end{figure*}

\subsection{Joint Design Validation}
\label{subsec:num_joint_design_validation}

Exact validation uses a $5\times5$ candidate position grid, eight candidate receive APs, and $N_{\rm rx}=3$. For each $J$, $F_J^{\rm opt}$ is the largest $C_{\min}^{\star}$ over all alphabets and receive AP subsets. For each subset, thresholded clique feasibility gives the exact maximin Chernoff value without explicitly enumerating every alphabet. We set $\mathrm{SNR}_{\rm ref}=11$ dB, $J=3,\ldots,10$, use $100$ random receive AP layouts, and retain $N_{\rm init}=8$ starts.

The sequential design selects each new receive AP using the current $\cC$, refines $\cC$ only after selection, and omits the final AP exchange. The insertion only design uses receive AP and position insertion without subsequent position refinement or AP exchange.

Fig.~\ref{fig:joint_design_validation}(a) shows that the median $\widehat F_J/F_J^{\rm opt}$ of Algorithm~\ref{alg:joint_design} equals one for every tested $J$, and its mean relative gap never exceeds $0.33\%$. At $J=10$, the median ratios of the sequential and insertion only designs decrease to $0.942$ and $0.667$, respectively.

Fig.~\ref{fig:joint_design_validation}(b) shows that the reduction in candidate evaluations relative to explicit exhaustive enumeration increases from approximately $2.8\times$ to $1.10\times10^3$ as $J$ grows from $3$ to $10$.

\subsection{Guaranteed Communication Rate}
\label{subsec:num_guaranteed_rate}

For fixed $J$ and each receive AP budget, let $\widehat L_{\epsilon,J}$ be the smallest positive integer satisfying $\widehat F_J\ge C_{\rm req}(J,\epsilon,\widehat L_{\epsilon,J})$ when $\widehat F_J>0$. From \eqref{eq:embodied_rate}, the corresponding rate is $\widehat R_{\epsilon,J}=\log_2J/(\widehat L_{\epsilon,J}T_{\rm sen})$.

Using the same candidate position grid and receive AP pool as in the preceding experiment, we set $\epsilon=0.01$ and evaluate $J\in\{4,6,8\}$ with $\mathrm{SNR}_{\rm ref}\in\{10,13,16\}$ dB over $100$ random receive AP layouts. Fig.~\ref{fig:embodied_rate_scaling} plots the median normalized rate $\widehat R_{\epsilon,J}T_{\rm sen}$ versus $N_{\rm rx}$.

At $10$ dB, the ceiling corresponding to $\widehat L_{\epsilon,J}=1$ is reached at $N_{\rm rx}=3$ for $J=4$ and at $N_{\rm rx}=7$ for $J=6$. The $J=8$ design remains below its ceiling over the tested AP budgets. At $13$ dB, the $J=6$ and $J=8$ ceilings are reached at $N_{\rm rx}=4$ and $5$, respectively. At $16$ dB, these budgets decrease to $2$ and $3$.

The steps arise from the integer sensing duration $\widehat L_{\epsilon,J}$: additional receive APs may increase $\widehat F_J$ without changing the rate until the required number of sensing intervals decreases. The rate saturates when $\widehat L_{\epsilon,J}=1$. Increasing $J$ raises this ceiling but also introduces more pairwise reliability constraints, which can require more receive APs, higher sensing SNR, or longer sensing duration.
\section{Conclusion}
\label{sec:conclusion}

This paper established a theoretical foundation for networked embodied communication, showing how distributed sensing can make deliberately configured physical states reliably interpretable as network-level messages. 
The resulting analysis leads to several insights into how sensing resources should be understood in networked embodied communication.
\begin{itemize}[leftmargin=0.4cm]
    \item First, the most informative receive AP in isolation need not be the most valuable one to the network; what matters is the new distinction it contributes beyond the observations already available. With correlated observations, this contribution can emerge through cross-AP statistical relationships that are invisible from marginal observations alone.
    \item Second, spatial cooperation and temporal repetition play fundamentally different roles. Repetition can only accumulate evidence for distinctions that already exist, whereas additional viewpoints can reveal distinctions that are absent under the current receive AP set and thereby move the system from a regime with a nonzero error floor to one in which the decoding error vanishes asymptotically.
    \item Third, improving many symbol pairs does not necessarily improve alphabet-level reliability: once the alphabet is identifiable, the error exponent increases only when every currently limiting pair gains additional distinguishability.
\end{itemize}
These findings show that the value of a sensing network is determined not by the standalone quality or aggregate strength of its receivers, but by the complementary message distinctions their observations collectively provide.

\bibliographystyle{IEEEtran}
\bibliography{refs}

\clearpage
\section*{Supplementary Material}
The proofs omitted from the main paper are provided below in their order of appearance.

\subsection*{Proof of Theorem~\ref{thm:additional_ap_distinguishability}}
\begin{NewProof}
Factor the joint densities as
\[
p_k(\bm y,\bm z)
=
p_k(\bm y)
p_k(\bm z\mid\bm y),
\qquad
k\in\{i,j\}.
\]
The Chernoff coefficient over $\cS_2$ is
\[
\begin{aligned}
e^{-C_{ij}(s;\cS_2)}
&=
\int
p_i^s(\bm y)
p_j^{1-s}(\bm y)
\rho_s(\bm y)
\dd\bm y,\\
\rho_s(\bm y)
&=
\int
p_i^s(\bm z\mid\bm y)
p_j^{1-s}(\bm z\mid\bm y)
\dd\bm z.
\end{aligned}
\]
From \eqref{eq:conditional_innovation_model}, the two conditional Gaussian laws have means $\bm A_i\bm y$ and $\bm A_j\bm y$ and covariance matrices $\bm V_i$ and $\bm V_j$. Let $\Delta\bm A_{ij}=\bm A_i-\bm A_j$. Evaluating the Gaussian integral over $\bm z$ gives
\[
\begin{aligned}
\rho_s(\bm y)
&=
\exp[-\Delta_{ij}^{\rm cov}(s)]\\
&\quad\times
\exp\left[
-s(1-s)
\bm y^H
\Delta\bm A_{ij}^H
\overline{\bm V}_{ij}^{-1}(s)
\Delta\bm A_{ij}
\bm y
\right].
\end{aligned}
\]

Normalizing $p_i^s(\bm y)p_j^{1-s}(\bm y)$ by its integral gives a zero mean Gaussian tilted distribution with covariance
\[
\bm Q_{ij}(s)
=
\left[
s\bm\Sigma_{y,i}^{-1}
+
(1-s)\bm\Sigma_{y,j}^{-1}
\right]^{-1}.
\]
Let
\[
\bm T_s
=
s(1-s)\Delta\bm A_{ij}^H
\overline{\bm V}_{ij}^{-1}(s)\Delta\bm A_{ij}.
\]
Hence,
\[
\begin{aligned}
&\exp[-C_{ij}(s;\cS_2)+C_{ij}(s;\cS_1)]\\
&\quad=
\exp[-\Delta_{ij}^{\rm cov}(s)]
\E_{\widetilde p_s}\!\left[e^{-\bm y^H\bm T_s\bm y}\right],
\end{aligned}
\]
where $\widetilde p_s=\CN(\bm0,\bm Q_{ij}(s))$. Using
\[
\E\!\left[e^{-\bm y^H\bm T\bm y}\right]
=
\det(\bm I+\bm Q\bm T)^{-1}
\]
for $\bm y\sim\CN(\bm0,\bm Q)$ and $\bm T\succeq\bm0$, followed by Sylvester's determinant identity, gives
\[
\begin{aligned}
\E_{\widetilde p_s}\!\left[e^{-\bm y^H\bm T_s\bm y}\right]
&=
\frac{\det\overline{\bm V}_{ij}(s)}
{\det\bigl(\overline{\bm V}_{ij}(s)+\bm\Omega_{ij}(s)\bigr)}.
\end{aligned}
\]
Taking the negative logarithm yields \eqref{eq:chernoff_function_increment}.

Concavity of $\log\det(\cdot)$ gives $\Delta_{ij}^{\rm cov}(s)\ge0$, with equality exactly when $\bm V_i=\bm V_j$. Since $\bm Q_{ij}(s)\succ\bm0$ and $\overline{\bm V}_{ij}(s)\succ\bm0$, $\Delta_{ij}^{\rm mean}(s)\ge0$, with equality exactly when $\bm A_i=\bm A_j$. This proves \eqref{eq:conditional_redundancy_condition}.

If \eqref{eq:conditional_redundancy_condition} holds, the two Chernoff functions are identical for all $s$, so their optimized values are equal. Otherwise, the increment is strictly positive for every $s\in(0,1)$. When $C_{ij}^{\star}(\cS_1)>0$, its maximizer lies in $(0,1)$, and evaluating the larger Chernoff function at this maximizer gives $C_{ij}^{\star}(\cS_2)>C_{ij}^{\star}(\cS_1)$. When $C_{ij}^{\star}(\cS_1)=0$, any $s\in(0,1)$ gives $C_{ij}(s;\cS_2)>0$. This proves \eqref{eq:pairwise_chernoff_monotonicity} and its equality condition.

If $i\sim_{\cS_2}j$, marginalizing the observations in $\cS_2\setminus\cS_1$ gives $i\sim_{\cS_1}j$, so $\Pi_{\cS_2}$ refines $\Pi_{\cS_1}$. Conversely, for a pair satisfying $i\sim_{\cS_1}j$, we have $C_{ij}^{\star}(\cS_1)=0$. The equality condition above gives
\[
i\sim_{\cS_2}j
\quad\Longleftrightarrow\quad
\bm A_i=\bm A_j
\ \text{and}\
\bm V_i=\bm V_j.
\]
Combining the two directions proves \eqref{eq:identifiability_under_ap_expansion}. Applying this condition to every $i<j$ that is equivalent under $\cS_1$ gives the final statement.
\end{NewProof}

\subsection*{Proof of Corollary~\ref{cor:independent_identifiability}}
\begin{NewProof}
Under independence, the joint density is the product of the local AP densities. Two embodied symbols induce the same joint distribution if and only if their local distributions are identical at every selected receive AP. Hence, a pair is distinguishable jointly if and only if it belongs to $\mathcal D_r$ for at least one $r\in\cS$.
\end{NewProof}

\subsection*{Proof of Theorem~\ref{thm:reliability_characterization}}
\begin{NewProof}
For the nonidentifiable regime, consider an equivalence class $\mathcal E\in\Pi_{\cS}$ and denote its common $L$-interval observation distribution by $p_{\mathcal E}^{(L)}$. For any randomized decoder with output $\widehat W$,
\[
\begin{aligned}
&\sum_{i\in\mathcal E}
\Pr\{\widehat W=i\mid W=i\}\\
&\quad=
\int
p_{\mathcal E}^{(L)}\bigl(\bm Y^{(L)}\bigr)
\sum_{i\in\mathcal E}
\Pr\{\widehat W=i\mid\bm Y^{(L)}\}
\dd\bm Y^{(L)}\\
&\quad\le1.
\end{aligned}
\]
Hence at least one member of $\mathcal E$ has correct decision probability no greater than $1/|\mathcal E|$. Taking an equivalence class of size $\kappa_{\max}(\cC,\cS)$ gives $P_{\max}^{\star}(L;\cC,\cS)\ge1-1/\kappa_{\max}(\cC,\cS)$ for every $L$.

For achievability, regard each class in $\Pi_{\cS}$ as one statistical hypothesis and apply ML detection among the distinct class distributions. Any two distinct classes induce different observation distributions, so their pairwise Chernoff information is positive. Since the number of classes is finite, the probability of selecting the wrong class tends to zero as $L\to\infty$. After a class $\mathcal E$ is selected, choose one of its members uniformly. For any $i\in\mathcal E$, the conditional error probability therefore converges to $1-1/|\mathcal E|$. The largest asymptotic error occurs in a class of size $\kappa_{\max}(\cC,\cS)$, proving \eqref{eq:ambiguity_error_floor}.

For the identifiable regime, identifiability gives $C_{\min}^{\star}(\cC,\cS)>0$. Conditioned on $W=i$, an ML error implies that at least one competing likelihood is no smaller than the likelihood of $i$. The pairwise Chernoff bound and the union bound therefore give
\[
P_{\max}^{\rm ML}(L;\cC,\cS)
\le
(J-1)
\exp
\left[
-LC_{\min}^{\star}(\cC,\cS)
\right].
\]
Since $P_{\max}^{\star}(L;\cC,\cS)\le P_{\max}^{\rm ML}(L;\cC,\cS)$, the asymptotic error exponents of both quantities are at least $C_{\min}^{\star}(\cC,\cS)$.

For the converse, choose a pair $(i^{\star},j^{\star})$ attaining $C_{\min}^{\star}(\cC,\cS)$. Let $P_{e,ij}^{\star}(L)$ denote the minimum equal-prior binary error probability for discriminating embodied symbols $i$ and $j$ from $\bm Y^{(L)}$. Any $J$-ary decoder induces a binary decoder for the pair $(i^{\star},j^{\star})$ by deciding $i^{\star}$ only when the original decoder outputs $i^{\star}$ and deciding $j^{\star}$ otherwise. Under $W=i^{\star}$, the induced binary error probability is no greater than the original conditional error probability. Under $W=j^{\star}$, $\Pr\{\widehat W=i^{\star}\mid W=j^{\star}\}\le\Pr\{\widehat W\ne j^{\star}\mid W=j^{\star}\}$. Hence the average error probability of the induced binary decoder is no greater than the maximum error probability of the original decoder. It follows that $P_{\max}^{\star}(L;\cC,\cS)\ge P_{e,i^{\star}j^{\star}}^{\star}(L)$. The same construction applied to the ML decoder gives $P_{\max}^{\rm ML}(L;\cC,\cS)\ge P_{e,i^{\star}j^{\star}}^{\star}(L)$. By the binary Chernoff theorem \cite{chernoff1952measure,poor1994introduction},
\[
\lim_{L\to\infty}
-\frac{1}{L}
\log
P_{e,i^{\star}j^{\star}}^{\star}(L)
=
C_{i^{\star}j^{\star}}^{\star}(\cS)
=
C_{\min}^{\star}(\cC,\cS).
\]
Combining the upper and lower bounds proves \eqref{eq:maximum_error_exponents}.

We next consider $\emptyset\ne\cS_1\subset\cS_2$. Theorem~\ref{thm:additional_ap_distinguishability} shows that $\Pi_{\cS_2}$ refines $\Pi_{\cS_1}$ and that $C_{ij}^{\star}(\cS_2)\ge C_{ij}^{\star}(\cS_1)$ for every $i<j$.
A refinement cannot increase the size of the largest equivalence class, and taking the minimum of the pairwise inequalities gives \eqref{eq:reliability_monotonicity}.

For case (a), suppose $\cC$ is nonidentifiable under both $\cS_1$ and $\cS_2$. By part 1), the asymptotic error floor is $1-1/\kappa_{\max}(\cC,\cS)$. Since $1-1/\kappa$ is strictly increasing for integer $\kappa\ge2$, the error floor under $\cS_2$ is strictly smaller than that under $\cS_1$ if and only if $\kappa_{\max}(\cC,\cS_2)<\kappa_{\max}(\cC,\cS_1)$.

For case (b), suppose $\kappa_{\max}(\cC,\cS_1)>1$. Since $\Pi_{\cS_2}$ refines $\Pi_{\cS_1}$, every pair that is distinguishable under $\cS_1$ remains distinguishable under $\cS_2$. Hence, $\kappa_{\max}(\cC,\cS_2)=1$ if and only if every $i<j$ satisfying $i\sim_{\cS_1}j$ becomes distinguishable under $\cS_2$. By \eqref{eq:identifiability_under_ap_expansion}, this is equivalent to requiring $\bm A_i\ne\bm A_j$ or $\bm V_i\ne\bm V_j$ for every such pair. Parts 1) and 2) then show that the nonzero error floor under $\cS_1$ is replaced by exponential error decay under $\cS_2$.

For case (c), suppose $\cC$ is identifiable under $\cS_1$ and let $c_1=C_{\min}^{\star}(\cC,\cS_1)$. By pairwise monotonicity, every pair satisfying $C_{ij}^{\star}(\cS_1)>c_1$ remains strictly above $c_1$ under $\cS_2$. For a pair attaining $c_1$, Theorem~\ref{thm:additional_ap_distinguishability} shows that its Chernoff information strictly increases if and only if $\bm A_i\ne\bm A_j$ or $\bm V_i\ne\bm V_j$. Hence, if every pair attaining $c_1$ satisfies this condition, all finitely many symbol pairs have Chernoff information strictly larger than $c_1$ under $\cS_2$, so $C_{\min}^{\star}(\cC,\cS_2)>c_1$. Conversely, if one pair attaining $c_1$ satisfies $\bm A_i=\bm A_j$ and $\bm V_i=\bm V_j$, its Chernoff information remains equal to $c_1$, and pairwise monotonicity gives $C_{\min}^{\star}(\cC,\cS_2)=c_1$. This proves case (c) and completes the proof.
\end{NewProof}

\subsection*{Proof of Proposition~\ref{prop:maximum_error_bound}}
\begin{NewProof}
Condition on $W=i$. An ML error occurs only if at least one competing hypothesis $j\ne i$ has likelihood no smaller than that of $i$. For each competing pair, the Chernoff bound over $L$ independent sensing intervals gives
\[
\begin{aligned}
&\Pr\!\left\{p_j^{(L)}(\bm Y^{(L)})\ge p_i^{(L)}(\bm Y^{(L)})\mid W=i\right\}\\
&\qquad\le \exp\!\left[-LC_{ij}^{\star}(\cS)\right].
\end{aligned}
\]
Applying the union bound over the $J-1$ competing hypotheses yields
\[
\begin{aligned}
&\Pr\{\delta_{\rm ML}(\bm Y^{(L)})\ne i\mid W=i\}\\
&\quad\le
\sum_{j\ne i}
\exp\!\left[-LC_{ij}^{\star}(\cS)\right]\\
&\quad\le
(J-1)
\exp\!\left[-LC_{\min}^{\star}(\cC,\cS)\right].
\end{aligned}
\]
Taking the maximum over $i$ proves \eqref{eq:maximum_error_bound}. Requiring the right-hand side to be no greater than $\epsilon$ gives \eqref{eq:reliability_requirement}.
\end{NewProof}

\subsection*{Proof of Proposition~\ref{prop:pairwise_necessary_condition}}
\begin{NewProof}
Choose any pair $(i,j)$. From the original $J$-ary decoder, construct a binary decoder that decides $i$ only when the original output is $i$ and decides $j$ otherwise. Under $W=i$, its error probability is no greater than the original conditional error probability. Under $W=j$, $\Pr\{\widehat W=i\mid W=j\}\le\Pr\{\widehat W\ne j\mid W=j\}$. Hence its equal-prior average error probability is no greater than $\epsilon$, and therefore $P_{e,ij}^{\star}(L)\le\epsilon$.

For equal priors, $P_{e,ij}^{\star}(L)=[1-\operatorname{TV}(p_i^{(L)},p_j^{(L)})]/2$, where $\operatorname{TV}(p,q)=\frac{1}{2}\int|p(\bm u)-q(\bm u)|\dd\bm u$.
Thus, $\operatorname{TV}(p_i^{(L)},p_j^{(L)})\ge1-2\epsilon$. The total variation distance and the Bhattacharyya coefficient satisfy \cite{poor1994introduction}
\[
\operatorname{TV}
\bigl(
p_i^{(L)},
p_j^{(L)}
\bigr)
\le
\sqrt{
1-
\left(
\int
\sqrt{
p_i^{(L)}p_j^{(L)}
}
\right)^2
}.
\]
Combining the two inequalities gives $\bigl(\int\sqrt{p_i^{(L)}p_j^{(L)}}\bigr)^2\le4\epsilon(1-\epsilon)$. The Bhattacharyya distance of $L$ independent observations is $LB_{ij}$, so $LB_{ij}\ge\frac{1}{2}\log\frac{1}{4\epsilon(1-\epsilon)}$.

The Bhattacharyya requirement can now be converted into a necessary condition on the optimized Chernoff information. Since $C_{ij}^{\star}(\cS)\ge C_{ij}(1/2;\cS)=B_{ij}$, every pair must satisfy $C_{ij}^{\star}(\cS)\ge\frac{1}{2L}\log\frac{1}{4\epsilon(1-\epsilon)}$.
Taking the minimum over all pairs proves \eqref{eq:necessary_reliability_requirement}.
\end{NewProof}

\subsection*{Proof of Proposition~\ref{prop:ap_objective_structure}}
\begin{NewProof}
For nonempty $\cS_1\subseteq\cS_2$, monotonicity follows from Theorem~\ref{thm:additional_ap_distinguishability}, which gives
\[
C_{ij}^{\star}(\cS_2)
\ge
C_{ij}^{\star}(\cS_1),
\qquad
\cS_1\subseteq\cS_2,
\]
for every symbol pair.
Taking the minimum over all pairs gives $f_{\cC}(\cS_2)\ge f_{\cC}(\cS_1)$.
The case $\cS_1=\emptyset$ follows from $f_{\cC}(\emptyset)=0$ and the nonnegativity of Chernoff information.

To show that diminishing returns need not hold, consider three embodied symbols and two independent receive APs with scalar observation variances
\[
\begin{array}{c|ccc}
&\bm q_1&\bm q_2&\bm q_3\\
\hline
r_1&\nu_{\rm A}&\nu_{\rm B}&\nu_{\rm B}\\
r_2&\nu_{\rm A}&\nu_{\rm A}&\nu_{\rm B}
\end{array},
\]
where $\nu_{\rm A}\ne\nu_{\rm B}$ are positive.
Let $C_0>0$ denote the Chernoff information between $\CN(0,\nu_{\rm A})$ and $\CN(0,\nu_{\rm B})$.

With only $r_1$, symbols $\bm q_2$ and $\bm q_3$ are equivalent, while with only $r_2$, symbols $\bm q_1$ and $\bm q_2$ are equivalent.
Hence,
\[
f_{\cC}(\emptyset)
=
f_{\cC}(\{r_1\})
=
f_{\cC}(\{r_2\})
=
0.
\]
Because the two nonzero local binary problems use the same variance pair $(\nu_{\rm A},\nu_{\rm B})$, their Chernoff functions have the same maximizing parameter.
Under the joint observation,
\[
C_{12}^{\star}
=
C_0,
\qquad
C_{23}^{\star}
=
C_0,
\qquad
C_{13}^{\star}
=
2C_0.
\]
Thus, $f_{\cC}(\{r_1,r_2\})=C_0$.
The gain from adding $r_2$ to the empty set is therefore zero, whereas the gain from adding it after $r_1$ is $C_0$.
This violates the diminishing returns condition required for submodularity.
\end{NewProof}

\subsection*{Proof of Proposition~\ref{prop:signal_subspace_reduction}}
\begin{NewProof}
For $k\in\{i,j\}$, \eqref{eq:block_cov} and the definition of $\bm U_k$ give $\bm K^{-1/2}\bm\Sigma_k\bm K^{-1/2}=\bm I+\bm U_k\bm U_k^H$. Taking determinants yields $\det\bm\Sigma_k=\det(\bm K)\det(\bm I+\bm U_k\bm U_k^H)$.
Similarly,
\[
\begin{aligned}
&\bm K^{-1/2}
\bigl[
(1-s)\bm\Sigma_i+s\bm\Sigma_j
\bigr]
\bm K^{-1/2}\\
&\qquad=
\bm I+(1-s)\bm U_i\bm U_i^H
+s\bm U_j\bm U_j^H\\
&\qquad=
\bm I+
\bm U_{ij}(s)\bm U_{ij}^H(s).
\end{aligned}
\]
Thus, $\det\bigl((1-s)\bm\Sigma_i+s\bm\Sigma_j\bigr)=\det(\bm K)\det\bigl(\bm I+\bm U_{ij}(s)\bm U_{ij}^H(s)\bigr)$. Substituting these determinant expressions into \eqref{eq:gaussian_chernoff} cancels the $\log\det\bm K$ terms. Applying Sylvester's determinant identity, $\det(\bm I+\bm U\bm U^H)=\det(\bm I+\bm U^H\bm U)$, to the three remaining determinants gives \eqref{eq:reduced_chernoff}.
\end{NewProof}

\end{document}